\documentclass{llncs}
\usepackage{llncsdoc}
\usepackage{caption}
\usepackage{subcaption}
\usepackage{graphicx}
\usepackage{comment}
\usepackage{verbatim}
\usepackage{amssymb}
\usepackage[ruled,vlined,linesnumbered]{algorithm2e}
\begin{document}

\frenchspacing

\title{\Large Speeding up Graph Algorithms\\ via Switching Classes}
\institute{Colorado State University, Fort Collins CO 80521, USA}
\author{Nathan Lindzey~\thanks{lindzey@math.colostate.edu,
Mathematics Department,
Colorado State University,
Fort Collins, CO, 80523-1873
U.S.A.}
}
\date{}
\maketitle

\begin{abstract} \small\baselineskip=9pt 
Given a graph $G$, a \emph{vertex switch} of $v \in V(G)$ results in a new graph where neighbors of $v$ become nonneighbors and vice versa.  This operation gives rise to an equivalence relation over the set of labeled digraphs on $n$ vertices.  The equivalence class of $G$ with respect to the switching operation is commonly referred to as $G$'s \emph{switching class}. The algebraic and combinatorial properties of switching classes have been studied in depth; however, they have not been studied as thoroughly from an algorithmic point of view.  The intent of this work is to further investigate the algorithmic properties of switching classes. In particular, we show that switching classes can be used to asymptotically speed up several super-linear unweighted graph algorithms. The current techniques for speeding up graph algorithms are all somewhat involved insofar that they employ sophisticated pre-processing, data-structures, or use ``word tricks" on the RAM model to achieve at most a $O(\log(n))$ speed up for sufficiently dense graphs.  Our methods are much simpler and can result in super-polylogarithmic speedups. In particular, we achieve better bounds for diameter, transitive closure, bipartite maximum matching, and general maximum matching.
\end{abstract}

\section{Introduction}

The runtime of an algorithm is intimately related to how an instance is represented.  
Recall that the runtimes of the first generation of graph algorithms were expressed solely in terms of $n$, the number of vertices.  This analysis was natural since at this time graphs were represented in $\Theta(n^2)$ space via their adjacency matrix.  It was soon noticed that if $m = o(n^2)$, then a variety of graph algorithms could be sped up by first computing the adjacency list from the adjacency matrix, then running the algorithm on the more efficient adjacency list representation. This motivated the introduction of $m$ to the runtime of graph algorithms and it is now customary in algorithm design to assume that a graph instance is given in the form of its adjacency list. 

We introduce $\widetilde{m}$ as a measure of complexity and show many classical graph algorithms can be analyzed in terms of $\widetilde{m}$.  This is a significant measure of complexity since $\widetilde{m} = O(m)$ but $\widetilde{m} \neq \Theta(m)$.  In particular, if $\widetilde{m} =  o(m)$, then several graph algorithms can be asymptotically sped up by computing the so-called \emph{partially complemented adjacency list}  (pc-list) from an adjacency list, then running the algorithm on the more efficient partially complemented adjacency list representation.  

The pc-list~\cite{DahlhausGM02} is a natural generalization of the adjacency list  that involves an additional $O(n)$ bits of storage to represent \emph{vertex switches}.  When a vertex is switched, its neighbors become nonneighbors and nonneighbors become neighbors.   A (di)graph afforded such a switching operation is commonly referred to as a \emph{switching class}~\cite{Seidel76}.  Figure 1 of the appendix demonstrates how a pc-list can represent a switching class which can in turn be used to obtain a more compact representation of a graph.  Algebraic and combinatorial properties of switching classes have been studied in depth~\cite{Seidel76,ChengW86}; however, they have not been studied as thoroughly from an algorithmic point of view.  The intent of this work is extend~\cite{DahlhausGM02} by further investigating algorithmic properties of switching classes.

In~\cite{DahlhausGM02} canonical $\Theta(n+m)$ unweighted graph algorithms were developed for switching classes; however, due to the linear-time solvability of these problems, the pc-list provided no asymptotic speed up in runtime.  We extend this work by developing switching class algorithms for classical unweighted graph problems for which no linear-time algorithm is known.  We show that for sufficiently dense graphs, the pc-list can provide super-polylogarithmic speed ups in runtime. 

This is notable since the current techniques for speeding up algorithms over dense instances are all somewhat involved and achieve at most a $O(\log(n))$ speed up.  A data-structure in~\cite{KaoOT98} is given that allows one to work on the complement of a graph without constructing it; however, the algorithms they consider are linear and do not improve any of the results established in~\cite{DahlhausGM02}.
The techniques in~\cite{CheriyanM96} are notable in that they achieve a $O(\log(n))$ speed-up for several canonical graph problems over arbitrary dense graphs (assuming the RAM model).  Clever but complicated preprocessing in~\cite{FederM95} allows for an asymptotic speedup that is logarithmic in the density of the graph that is at most $O(\log(n))$.  

Our approach is much simpler insofar that it involves only basic preprocessing of the graph and slight modifications to existing algorithms.

\section{Preliminaries}
All graphs are assumed to be finite, labeled, directed, unweighted, and simple unless stated otherwise, and let $\mathcal{G}$ denote the class of all such graphs on $n$ vertices. Let $V(G)$ and $E(G)$ denote vertex set and edge set of $G$ respectively.  Let $E(A,B)$ denote the set of edges that have exactly one endpoint in $A \subseteq V$ and exactly one endpoint in $B \subseteq V$.  Let $G[X]$ denote the subgraph induced by the vertex set $X \subseteq V$.  An \emph{out-switch} (\emph{in-switch}) of a vertex changes out-neighbors (in-neighbors) to non out-neighbors (in-neighbors) and vice versa. A \emph{Seidel-switch} of a vertex in an undirected graph changes neighbors to nonneighbors and nonneighbors to neighbors.  Performing an out-switch and in-switch on the same vertex of an undirected graph is equivalent to Seidel-switching that vertex.  Let $\neg^+_v(G)$, $\neg^-_v(G)$, and $\neg_v(G)$ be the graphs obtained by out, in, and Seidel switching on a vertex $v \in V(G)$ respectively.  It is easy to see that the order in which vertices are switched does not matter, so let $\neg^+_U(G)$, $\neg^-_U(G)$, and $\neg_U(G)$ be the graph obtained by out, in, and Seidel switching on a set $U \subseteq V(G)$.  If a mixed sequence of in and out switches are permitted, then let $\neg^{\pm}_{I,O}(G)$ be the graph obtained by \emph{Gale-Berlekamp switching} where $I,O \subseteq V(G)$ are the subsets of vertices that have been in-switched and out-switched respectively.

\begin{definition}
Let $G,H \in \mathcal{G}$.  Then $G \sim^* H$ iff $\exists U \subseteq V$ such that $\neg^*_U(G) \cong H$ with respect to some switching operation $*$.
\end{definition}

\begin{proposition}
$\sim^x$ is an equivalence relation over $\mathcal{G}$.
\end{proposition}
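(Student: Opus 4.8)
The plan is to verify the three defining properties of an equivalence relation --- reflexivity, symmetry, and transitivity --- for $\sim^x$, where $x$ ranges over the switching operations $+$, $-$, the Seidel variant, and the Gale-Berlekamp variant. The whole argument rests on two structural facts about switching that I would isolate first as lemmas. The first is a \emph{group law}: switches on subsets compose by symmetric difference, $\neg^*_W(\neg^*_U(G)) = \neg^*_{U \triangle W}(G)$; in particular each $\neg^*_U$ is an involution, since $\neg^*_U(\neg^*_U(G)) = \neg^*_\emptyset(G) = G$. This is immediate from the vertex-wise definition, because switching a vertex twice restores its (non)adjacencies and switches on distinct vertices commute (as already noted in the preliminaries). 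The second is \emph{equivariance under relabeling}: for any isomorphism (vertex permutation) $\phi$ one has $\neg^*_{\phi(U)}(\phi(G)) = \phi(\neg^*_U(G))$, which holds because relabeling and complementing about a vertex act on disjoint parts of the adjacency data. For the Gale-Berlekamp operation the ``subset'' $U$ is replaced by a pair $(I,O)$ of in- and out-switched vertices, but both lemmas go through verbatim with $\triangle$ acting coordinatewise.

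Reflexivity is immediate: taking $U = \emptyset$ gives $\neg^*_\emptyset(G) = G \cong G$, so $G \sim^* G$. For symmetry, suppose $G \sim^* H$, witnessed by a set $U$ and an isomorphism $\phi$ with $\phi(\neg^*_U(G)) = H$. I would apply the switch $\neg^*_{\phi(U)}$ to $H$, push it through $\phi$ using equivariance, and then collapse the doubled switch with the group law:
$$
\neg^*_{\phi(U)}(H) = \neg^*_{\phi(U)}\bigl(\phi(\neg^*_U(G))\bigr) = \phi\bigl(\neg^*_U(\neg^*_U(G))\bigr) = \phi(G) \cong G,
$$
so $H \sim^* G$ with witness $\phi(U)$.

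Transitivity is the analogous but slightly longer computation, and I expect it to be the main bookkeeping obstacle, since two isomorphisms and two switch sets must be combined in the correct order. Suppose $\phi(\neg^*_U(G)) = H$ and $\psi(\neg^*_W(H)) = K$. Substituting the first into the second and moving $\neg^*_W$ across $\phi$ by equivariance produces $K = (\psi \circ \phi)\bigl(\neg^*_{U \triangle \phi^{-1}(W)}(G)\bigr)$, where the group law has merged the two switches. Since $\psi \circ \phi$ is again an isomorphism, this exhibits a single switch set $U \triangle \phi^{-1}(W)$ carrying $G$ to a graph isomorphic to $K$, giving $G \sim^* K$. The only genuine care needed is that the switch $\neg^*_W$, which lives on the vertex labels of $H$, must be pulled back to $\neg^*_{\phi^{-1}(W)}$ on the labels of $G$ before the two switches can be combined, and the equivariance lemma is precisely what licenses this pullback.
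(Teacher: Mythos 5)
Your proof is correct and complete: the two lemmas you isolate (the group law $\neg^*_W \circ \neg^*_U = \neg^*_{U \triangle W}$ and equivariance of switching under relabeling) are exactly what is needed, and the reflexivity/symmetry/transitivity computations are carried out correctly, including the pullback $\neg^*_W \mapsto \neg^*_{\phi^{-1}(W)}$ in the transitivity step and the coordinatewise treatment of the pair $(I,O)$ in the Gale--Berlekamp case. The paper offers no proof of this proposition at all --- it later remarks only that the claim is ``routine to show'' --- so your argument is the standard verification that the paper leaves implicit, not a divergence from it.
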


\begin{definition}
Let $\mathcal{C}^*_G = \{H \in \mathcal{G} : G \sim^x H\}$.  Then $\mathcal{C}^*_G$ is the switching class of $G$ with respect to some switching operation $*$.
\end{definition}

\noindent 
In particular, we let $\mathcal{C}^{+}_G$, $\mathcal{C}^{-}_G$, $\mathcal{C}^{\pm}_G$, and $\mathcal{C}_G$ denote the \emph{in, out, Gale-Berlekamp, and Seidel switching class of} $G$ respectively.  It is worth noting that in-switching classes and out-switching classes have an algebraic structure similar to Gale-Berlekamp switching classes~\cite{RothV08}. It is routine to show that  $\sim^+$ and $\sim^-$ form an equivalence relation over $\mathcal{G}$ that gives rise to the Abelian group $\mathbb{Z}_2^{n^2-2n}$.

\begin{definition}
The partially complemented adjacency list (pc-list) of a graph $G$ (with respect to some switching operation) is an adjacency list outfitted with a constant number of bitstrings of length $n$ that represent vertex switches.
\end{definition}
If a vertex $v$ is switched, then we let $\widetilde{N}(v)$ denote its doubly-linked neighborlist in the pc-list of $\widetilde{G}$.  If $v$ is unswitched, then we let $N(v)$ denote the doubly-linked neighbor list of $v$ in the pc-list of $\widetilde{G}$.  For any switched vertex $v$, we still refer to $\widetilde{N}(v)$ as the neighborlist of $v$ even though its elements are actually non-neighbors in the original graph.

\begin{proposition}
$\overline{G}$ is the graph obtained by out-switching (in-switching) all of the vertices.
\end{proposition}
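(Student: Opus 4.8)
The plan is to track what happens to each individual arc under the operation of out-switching every vertex, and to exploit the order-independence of switching already noted above. First I would fix the convention that out-switching a vertex $v$ toggles exactly the ordered pairs $(v,u)$ with $u \neq v$ (the potential out-arcs of $v$), while leaving every ordered pair whose first coordinate differs from $v$ untouched.

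The heart of the argument is a parity observation. Consider an arbitrary ordered pair $(u,v)$ with $u \neq v$. Among the $n$ out-switches comprising $\neg^+_V(G)$, this pair is toggled by the out-switch of $u$ and by no other, since out-switching any vertex $w \neq u$ affects only pairs whose first coordinate is $w$. Hence $(u,v)$ is toggled exactly once.

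Because the order of switches is irrelevant, I may apply them in any sequence and conclude that the net effect on $(u,v)$ is a single complementation: the arc $(u,v)$ is present in $\neg^+_V(G)$ if and only if it is absent in $G$. As this holds for every ordered pair of distinct vertices, and no self-loops arise in a simple graph, $\neg^+_V(G)$ and $\overline{G}$ have exactly complementary arc sets, so $\neg^+_V(G) = \overline{G}$. The in-switching case is identical after replacing ``first coordinate'' with ``second coordinate'': in-switching $v$ toggles the pairs $(\cdot, v)$, so again each ordered pair is toggled exactly once under $\neg^-_V(G)$.

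I do not anticipate a genuine obstacle. The one point meriting care is verifying that distinct out-switches act on disjoint sets of ordered pairs, so that their toggles do not interfere; this disjointness is precisely what guarantees the ``exactly once'' parity and hence the clean complementation, and it is also what makes the appeal to order-independence legitimate here.
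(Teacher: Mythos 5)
Your argument is correct: each ordered pair $(u,v)$ lies in the toggle set of exactly one of the $n$ out-switches (namely that of $u$), so every arc is complemented exactly once, and symmetrically for in-switches. The paper states this proposition without proof, treating it as immediate, and your parity/disjointness argument is precisely the justification it has in mind, so there is nothing further to reconcile.
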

\noindent The proposition above is useful due to the fact that some graph classes can be recognized by considering properties of their complements~\cite{McConnellS99}. Unfortunately, constructing the complement graph $\overline{G}$ is an $\Omega(n^2)$ operation which precludes any linear-time bound.  The pc-list has proved useful in this context since it represents $\overline{G}$ implicitly which obviates the $\Omega(n^2)$ cost of constructing $\overline{G}$~\cite{DahlhausGM02,McConnellS99}.

The pc-list was motivated by McConnell's \emph{complement-equivalence classes}~\cite{McConnell97}.  It is straightforward to see that the symmetric complement-equivalence classes of~\cite{DahlhausGM02} coincide with Seidel switching classes and in-out complement-equivalence classes~\cite{DahlhausGM02} coincide with Gale-Berlekamp switching classes.  Due to this correspondence, it seems natural to couch the pc-list in terms of the existing theory of switching classes. 

It is obvious that we should seek out small members of switching classes to obtain a more succinct representation of a given graph.  Ideally, we should seek a member of a switching class with the fewest edges.
\begin{definition}
A minimum representative of a switching class $\mathcal{C}^*_G$ is a not necessarily unique graph $\widetilde{G} \in \mathcal{C}^*_G$ having minimum edge cardinality $\widetilde{m}$.
\end{definition}
If we limit ourselves to strictly out-switches or strictly in-switches, the following lemma shows that we can easily construct a minimum representative using a greedy algorithm.
\begin{lemma}\label{lem:buildOut}
\cite{DahlhausGM02} A minimum representative $\widetilde{G} \in \mathcal{C}^+_G$ ($\widetilde{G} \in\mathcal{C}^-_G$) can be constructed in $O(n+m)$ time.
\end{lemma}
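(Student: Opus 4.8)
The plan is to exploit that out-switches act independently across vertices: out-switching $v$ complements only the out-edges leaving $v$ (the row of the adjacency matrix indexed by $v$) and leaves every other out-neighborhood untouched. Hence for any $U \subseteq V$ the out-degree of $v$ in $\neg^+_U(G)$ is $d^+(v)$ when $v \notin U$ and $(n-1)-d^+(v)$ when $v \in U$, so the edge count of $\neg^+_U(G)$ is a sum of independent per-vertex terms. Because the objective separates over vertices, it is minimized by choosing for each $v$ independently whichever option is smaller, giving the greedy rule: out-switch $v$ exactly when $d^+(v) > (n-1)/2$ (ties broken arbitrarily). The graph $\widetilde G$ produced by this rule is therefore a minimum representative, and the in-switch case is identical after transposing, using in-degrees.

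For the construction I would first read the adjacency list and tabulate every out-degree $d^+(v)$ in $O(n+m)$ time, then record the switch set in a length-$n$ bitstring by flagging exactly those $v$ with $d^+(v) > (n-1)/2$. The remaining task is to materialize the neighborlist $\widetilde N(v)$ of each switched $v$, whose elements are the non-out-neighbors of $v$ in $G$ (the out-neighbors of $v$ in $\widetilde G$); unswitched vertices simply retain their original lists. Using one shared boolean array of length $n$, for each switched $v$ I would set the flags of its out-neighbors in $O(d^+(v))$ time, scan all $n$ vertices to append every unflagged $u \neq v$ to $\widetilde N(v)$, and then clear those flags in $O(d^+(v))$ time.

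The crux is bounding this last phase, since a naive reading gives an $O(n)$ scan per switched vertex and hence $\Theta(n^2)$ total when many vertices are switched. The resolution is that a vertex is switched only when $d^+(v) > (n-1)/2$, so $n \le 2\,d^+(v)+1 = O(d^+(v))$ for every switched vertex; the $O(n)$ scan is thus absorbed into $O(d^+(v))$, and summing over the switched vertices yields a total of $O(m)$. Equivalently, the handshake bound forces at most $2m/(n-1)$ switched vertices, again capping the scans at $O(m)$ in aggregate. Combining the $O(n+m)$ preprocessing with this $O(m)$ construction gives the claimed $O(n+m)$ bound, and the in-switch analysis is symmetric.
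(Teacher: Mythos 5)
Your proposal is correct and follows essentially the same route as the paper: greedily out-switch exactly the vertices whose out-degree exceeds roughly $n/2$, and charge the $O(n)$ scan that builds each switched vertex's non-neighbor list to that vertex's (more than $n/2$) original neighbors. You additionally spell out the correctness of the greedy rule via the separability of the edge count over per-vertex out-degrees, which the paper leaves implicit, but the algorithm and the charging argument are the same.
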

\begin{proof}
Visit each vertex and if switching it reduces the edge count, do so. For out switching, if there are more than $n/2$ elements in $v$'s neighbor list, switch $v$ and replace the neighbor list with non-neighbors of $v$.  The work for creating the list of non-neighbors can be charged to visiting the neighbors of $v$. For in switching, if $v$ appears more than $n/2$ times in the adjacency list of $G$, then switch $v$.  The work is clearly $O(m)$.
\end{proof}

\noindent Observe that if both in-switches and out-switches are allowed, then the algorithm in the proof of Lemma~\ref{lem:buildOut} no longer guarantees that the representative is a minimum.  This is because edges can reappear while constructing the representative.  It is known that computing minimum representatives for Gale-Berlekamp and Seidel switching classes is NP-hard and is even hard to approximate within a constant factor of the optimum~\cite{RothV08,JelinkovaSHK11}.  There do however exist randomized linear-time $(1+\epsilon)$-approximation schemes for computing a minimum representative $\widetilde{G} \in \mathcal{C}^\pm_G$~\cite{KarpinskiS09}.  This allows one to obtain a representative $G' \in \mathcal{C}^\pm_G$ such that $|G'| = \Theta(|\widetilde{G}|)$ in $O(n\log(n)+m\log(n))$ time with high probability.

\section{Basic Algorithms for Switching Classes}

\subsection{Traversal}

Traversal algorithms for out-switching classes first appeared in~\cite{DahlhausGM02} where the existence of $O(n+\widetilde{m})$ algorithms for traversal on Seidel switching classes was left open.  We show that these algorithms can obtained in a straightforward manner through a slight modification of the pc-list data-structure and the traversal algorithms of~\cite{DahlhausGM02,LindzeyO13}.
 We refer the reader to~\cite{DahlhausGM02,LindzeyO13} for a more thorough treatment.  We begin with an intuitive explanation as to why the pc-list is able to provide asymptotic savings in runtime for graph traversal.

Let $\mathcal{A}$ be a graph traversal algorithm and assume there exists an oracle $\mathcal{O}$ such that for any current vertex $v$, it returns in $O(1)$ time either an undiscovered neighbor $v$ or reports that all of $v$'s neighbors have been discovered.  If $\mathcal{A}$ considers a vertex that has already been discovered, then we shall call this a \emph{bad query}. It is clear that the runtime of $\mathcal{A}$ with oracle $\mathcal{O}$ is $\Theta(n)$ since $\mathcal{A}$ can make no bad queries.  This is no longer the case if we run $\mathcal{A}$ without $\mathcal{O}$ since we might have $\omega(n)$ bad queries for arbitrary graphs. In this case, the runtime of $\mathcal{A}$ is dominated by bad queries since we could have as many as $O(m)$.  However, if the size of $G$'s pc-list is asymptotically smaller than its adjacency list, then we can obtain a tighter upper bound on the number of bad queries that can occur during an execution of algorithm $\mathcal{A}$ without use of an oracle.  This is due to the fact that every bad query of BFS or DFS can be charged to an element of the pc-list data-structure~\cite{DahlhausGM02,LindzeyO13}.

\begin{theorem}
\cite{DahlhausGM02} Given $\widetilde{G} \in \mathcal{C}^+_G$, BFS on $G$ can be done $O(n+\widetilde{m})$ time.
\end{theorem}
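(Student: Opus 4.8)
The plan is to run an ordinary BFS but to represent the set of still-undiscovered vertices explicitly, so that a \emph{switched} vertex can report its undiscovered out-neighbors without ever touching its (implicitly large) true neighbor list. First I would set up three structures: a doubly-linked list $L$ holding every currently-undiscovered vertex, together with an array of pointers recording each vertex's location in $L$ so that a vertex can be deleted from $L$ in $O(1)$ time; a boolean array $\mathrm{mark}[\cdot]$ initialized to \texttt{false}; and the usual BFS queue $Q$. Throughout, ``discovering'' a vertex means deleting it from $L$, enqueuing it, and recording its level and parent.

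Next I would specify how a dequeued vertex $v$ generates its children, splitting on whether $v$ is switched. If $v$ is unswitched, I scan its explicit list $N(v)$ and discover every $u \in N(v)$ that still lies in $L$; this is plain BFS. If $v$ is switched, I set $\mathrm{mark}[u]=\texttt{true}$ for each $u \in \widetilde{N}(v)$, then walk the global list $L$: each $u \in L$ with $\mathrm{mark}[u]=\texttt{false}$ is a genuine out-neighbor of $v$ and is discovered, while each $u \in L$ with $\mathrm{mark}[u]=\texttt{true}$ is a non-neighbor and is skipped; afterwards I reset the marks by rescanning $\widetilde{N}(v)$. Correctness reduces to the single observation that, after out-switching, the stored list is exactly the set of non-out-neighbors, so the true out-neighborhood of a switched $v$ equals $V(G)\setminus(\widetilde{N}(v)\cup\{v\})$; hence keeping the unmarked vertices of $L$ discovers precisely the undiscovered out-neighbors of $v$. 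Combined with the unswitched case, the modified BFS visits exactly the reachable vertices in the correct order.

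The heart of the argument is the running-time bound, which I would obtain via the charging scheme sketched just before the statement. Since for an unswitched vertex the stored list is $N(v)$ and for a switched vertex it is $\widetilde{N}(v)$, the total size of all stored lists is exactly the number of (out-)edges of $\widetilde{G}$, namely $\widetilde{m}$; thus the marking, unmarking, and scanning of explicit lists cost $O(\widetilde{m})$ overall. The only potentially expensive step is walking all of $L$ at each switched vertex, which I would bound amortized: every vertex $u$ seen during such a walk is either discovered (and so removed from $L$ forever) or skipped because $u \in \widetilde{N}(v)$. Discoveries total at most $n$ over the whole run, and each skip is a \emph{bad query} that I charge to the element $u$ of $\widetilde{N}(v)$; because each switched $v$ is dequeued exactly once, every pc-list element is charged at most once, so the skips total at most $\sum_{v\ \mathrm{switched}}|\widetilde{N}(v)| \le \widetilde{m}$. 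Summing, the walks cost $O(n+\widetilde{m})$ and the entire BFS runs in $O(n+\widetilde{m})$ time.

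The step I expect to be the main obstacle is making this amortized charging fully rigorous: exhibiting the injection from bad queries to distinct pc-list elements and confirming that $L$ genuinely supports $O(1)$ deletion, so that a walk never re-counts an already-discovered vertex. Once the invariant ``a vertex skipped at switched $v$ lies in $\widetilde{N}(v)$, and $v$ is dequeued exactly once'' is pinned down, the $O(n+\widetilde{m})$ bound is immediate; the remaining details (queue maintenance, and handling several components by restarting from the residual $L$) are routine.
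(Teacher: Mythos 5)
Your proposal is correct and matches the approach the paper relies on: the mark-and-scan handling of switched vertices (mark $\widetilde{N}(v)$, walk the global undiscovered list, discover the unmarked, charge each skip to an element of $\widetilde{N}(v)$) is exactly the mechanism in the paper's BFS$^*$ pseudocode, and your amortized charging of bad queries to pc-list elements is the same argument the paper sketches just before the theorem. No substantive differences.
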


\begin{theorem}
\cite{DahlhausGM02,LindzeyO13} Given $\widetilde{G} \in \mathcal{C}^+_G$, DFS on $G$ can be done in $O(n+\widetilde{m})$ time.
\end{theorem}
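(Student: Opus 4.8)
The plan is to run an essentially standard recursive DFS directly on the pc-list of $\widetilde{G} \in \mathcal{C}^+_G$, treating switched and unswitched vertices differently, and to bound the total work via the bad-query charging argument sketched above. The only structures I maintain are the pc-list itself, a global doubly-linked list $U$ of the currently undiscovered vertices (with each vertex holding a pointer to its own node in $U$ so that deletions cost $O(1)$), and the recursion stack. For an unswitched vertex $v$ I proceed exactly as in ordinary DFS, walking a finger down $N(v)$: each already-discovered neighbor encountered is a bad query, which I charge to the corresponding element of $N(v)$, so the unswitched vertices contribute at most $\sum |N(v)| = O(\widetilde{m})$ bad queries in total, the sum ranging over unswitched vertices.

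The interesting case is a switched vertex $v$, whose stored list $\widetilde{N}(v)$ records \emph{non}-neighbors. Here I would not iterate over $v$'s implicit neighbor set at all; instead I advance a finger $p_v$ monotonically through the global undiscovered list $U$. For each vertex $u$ that $p_v$ reaches I test membership $u \in \widetilde{N}(v)$: if $u$ is a non-neighbor I skip it (leaving it in $U$ for later vertices) and advance $p_v$; if $u$ is a neighbor I remove it from $U$, mark it discovered, and recurse. This requires the one modification to the data structure that the statement alludes to: each switched vertex is equipped with an $O(1)$-time membership structure for $\widetilde{N}(v)$ (e.g.\ a hash table), costing only $O(\widetilde{m})$ additional space and preprocessing. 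The key point for correctness is that non-neighbors of $v$ are never removed from $U$ when skipped, so a descendant of $v$ can still discover them; the key point for efficiency is that $p_v$ is monotone, so $v$ skips each element of $\widetilde{N}(v)$ at most once. Every such skip is a bad query that I charge injectively to an element of the pc-list, bounding the switched-vertex bad queries by $\sum |\widetilde{N}(v)| = O(\widetilde{m})$, the sum over switched vertices. Since good queries (actual discoveries) number $O(n)$ and each membership test and list deletion is $O(1)$, the total running time is $O(n + \widetilde{m})$.

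The main obstacle I anticipate is maintaining the fingers $p_v$ correctly as $U$ changes underneath them. Because the finger of a suspended ancestor $w$ points into $U$ at the position where $w$ will resume, a descendant of $w$ may discover and delete the very node that $p_w$ references. I would handle this with lazy deletion together with forwarding pointers (with path compression) so that a finger landing on a deleted node advances in amortized $O(1)$ to the next live vertex, and I would then verify that these forwarding traversals are themselves absorbed into the $O(n + \widetilde{m})$ budget. I would also have to argue carefully that resuming $v$ exactly where $p_v$ left off never misses an undiscovered neighbor of $v$: since $U$ only ever loses elements, no new vertex is ever inserted behind $p_v$, so when $p_v$ reaches the end of $U$ every undiscovered neighbor of $v$ has indeed been considered, which is precisely the condition making the resulting tree a valid DFS tree. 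Checking this invariant, rather than the charging argument, is where I expect the real work to lie.
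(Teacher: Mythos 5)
Your high-level strategy --- scan the global undiscovered list $U$ with a monotone finger for each switched vertex and charge every skipped non-neighbor injectively to an element of $\widetilde{N}(v)$ --- is the same charging argument the cited algorithms use, and your correctness invariant (nothing is ever inserted behind a finger, so resuming where you left off misses no undiscovered neighbor) is the right one. The gaps are in the two $O(1)$ primitives you assume. First, the hash-table membership test $u \in \widetilde{N}(v)$ is only \emph{expected} $O(1)$; a deterministic constant-time membership structure buildable in $O(|\widetilde{N}(v)|)$ time and space is not available here (a global marking array does not work because DFS suspends and resumes switched vertices, so many marked sets must coexist), so as written your bound is randomized. The paper's route avoids adjacency tests entirely: $U$ and every $\widetilde{N}(v)$ are kept sorted by one common vertex ordering, and the scan is a merge of two fingers, one in $U$ and one in $\widetilde{N}(v)$, so ``is $u$ a non-neighbor of $v$?'' is a single comparison of the two current list elements.

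Second, and more seriously, your finger-maintenance scheme is not actually $O(1)$ amortized. Lazy deletion with forwarding pointers and path compression is an instance of interval union--find; path compression alone gives a superconstant amortized bound, and getting genuinely linear total time that way needs the specialized Gabow--Tarjan union--find --- you explicitly defer verifying that these traversals fit the budget, and that verification would fail for the plain scheme you describe. The paper's DFS* resolves this with a \emph{restart step} that needs no union--find: upon returning from a recursive call at switched vertex $v$, it walks \emph{backward} through $\widetilde{N}(v)$ from the current non-neighbor finger, deleting every entry that is no longer in $U$, until it reaches an entry still alive (or the front), and resumes the $U$-finger at that entry's successor in $U$. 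Each unit of backward walking permanently removes an element of $\widetilde{N}(v)$, so the total restart cost over the whole execution is $O(\widetilde{m})$. Both of your deferred steps are repaired simultaneously by the single device your proposal omits: keeping $U$ ordered consistently with the sorted pc-list.
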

\begin{proposition}
Let $S \subseteq V$ be a set of Seidel switched vertices and let $H = G[S] \cup G[S-V]$.  Then $\neg_S(G)$ is isomorphic to the graph $H' = (V(H),E(H) \cup \overline{E}(S,V-S))$ where $\overline{E}(S,V-S)$ is the complement of the cut induced by $(S,V-S)$.
\end{proposition}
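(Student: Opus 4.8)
The plan is to reduce the proposition to a single parity observation about Seidel switching and then read off the edge set of $\neg_S(G)$ by a short case analysis. The governing fact is a \emph{parity rule}: for any pair $\{u,w\} \subseteq V$, the adjacency of $u$ and $w$ in $\neg_S(G)$ differs from that in $G$ if and only if exactly one of $u,w$ lies in $S$. (Here I read the ``$S-V$'' in the statement as the intended $V-S$, so that $V(H)=V$.)

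First I would establish this parity rule. A single Seidel switch $\neg_v$ toggles (neighbor $\leftrightarrow$ nonneighbor) precisely those pairs that contain $v$, and leaves every other pair untouched. Since the order of switches is irrelevant, as noted in the preliminaries, switching the whole set $S$ toggles the pair $\{u,w\}$ once for each endpoint of $\{u,w\}$ lying in $S$, that is, $|\{u,w\} \cap S|$ times. Because two toggles cancel, the net effect is a single toggle exactly when $|\{u,w\} \cap S|$ is odd, which for a two-element pair means exactly when one endpoint is in $S$ and the other in $V-S$. A clean way to record this is a one-line induction on $|S|$.

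With the parity rule in hand, I would partition the pairs of $V$ into three classes and track each. For pairs with both endpoints in $S$ and for pairs with both endpoints in $V-S$, the intersection with $S$ has even size, so their adjacencies are preserved; these contribute exactly $E(G[S])$ and $E(G[V-S])$, which together are $E(H)$. For pairs straddling the cut the intersection has size one, so every such adjacency is flipped: the edges of $E(S,V-S)$ become nonedges and the nonedges become edges, so the surviving crossing edges are precisely $\overline{E}(S,V-S)$. Hence $E(\neg_S(G)) = E(H) \cup \overline{E}(S,V-S) = E(H')$, and since both graphs share the vertex set $V$ the identity map is the required isomorphism (indeed $\neg_S(G) = H'$ on the nose).

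The one genuinely substantive step is the parity rule; the remainder is bookkeeping. I expect the only subtlety to be confirming that a single switch toggles exactly the incident pairs and nothing else, and that order-independence lets us add the toggles contributed per pair — both of which follow immediately from the definition of $\neg_v$.
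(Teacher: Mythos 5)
Your proof is correct. The paper states this proposition without any proof at all (it is treated as routine and immediately used to justify the pc-list construction for Seidel switching), so there is nothing to compare against; your parity argument --- a pair's adjacency is toggled once per endpoint in $S$, so the net effect is a flip exactly for pairs crossing the cut $(S, V-S)$ --- is precisely the standard justification, and your reading of the evident typo $G[S-V]$ as $G[V-S]$ is the intended one.
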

Given an adjacency list representation of $G$ and a set of Seidel switched vertices $S \subseteq V$ such that $|E(\neg_S(G))| < |E(G)|$, a pc-list data-structure that represents $\neg_S(G)$ can be constructed in $O(n+m)$ time as follows.  Let $v$ be an arbitrary vertex. If $v$ is switched, set its bit to 1, add all of its neighbors in $S$ and its nonneighbors in $V-S$ into its neighborlist.  If $v$ is unswitched, set its bit to 0, add all of its neighbors in $V-S$ and its nonneighbors in $S$ into its neighborlist. Relabel the vertices so that the members of $V-S$ have a smaller label than members of $S$, then radix-sort the pc-list with respect to this new labeling.  The sort has the effect of making all nonneighbors of $v$ appear consecutively in $v$'s neighborlist.  Finally, insert a dummy vertex between the two elements $u$ and $w$ of $v$'s neighborlist such that $u$ is switched and $w$ is unswitched.

The same algorithms given in~\cite{DahlhausGM02} and~\cite{LindzeyO13} can be used on this pc-list representation with the following modification; once $v$'s dummy vertex is visited during a scan of its neighbor list, flip $v$'s bit in the pc-list.  If $v$ is switched, then the flip has the effect of treating elements after the dummy vertex as actual neighbors of $v$. If $v$ is unswitched, then the flip has the effect of treating elements after the dummy vertex as nonneighbors of $v$.  Accounting for this modification in the traversal algorithms of~\cite{DahlhausGM02,LindzeyO13} is trivial.  The foregoing gives the following result.
\begin{theorem}
Given $\widetilde{G} \in \mathcal{C}_G$, BFS and DFS on $G$ can be done in $O(n+\widetilde{m})$ time.
\end{theorem}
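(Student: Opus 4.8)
The plan is to reduce the Seidel case to the out-switching traversal algorithms guaranteed by the two preceding theorems, using the structural decomposition of the Proposition together with the dummy-vertex construction described above. First I would note that a minimum representative $\widetilde{G} \in \mathcal{C}_G$ arises as $\neg_S(G)$ for some switch set $S$, so by the Proposition the edges of $\widetilde{G}$ split into two kinds: intra-part edges (lying inside $G[S]$ or inside $G[V-S]$), which coincide with the corresponding edges of $G$, and cross edges, which form the complement $\overline{E}(S,V-S)$ of the cut. Hence recovering the $G$-neighborhood of a vertex $v$ amounts to reading its $\widetilde{G}$-neighbors directly on the side of the partition that contains $v$, and reading the complement of its $\widetilde{G}$-neighbors on the opposite side. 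Building the pc-list of size $O(n+\widetilde{m})$, radix-sorting so that the $(V-S)$-block precedes the $S$-block in every neighborlist, and inserting one dummy at the block boundary is exactly the preprocessing that separates these two regimes; given $\widetilde{G}$ this takes $O(n+\widetilde{m})$ time.

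Second, I would verify the bit-flip rule. I interpret the switch bit as a mode flag: bit $0$ means the listed entries are genuine neighbors, while bit $1$ means they are nonneighbors whose complement within the current block gives the true neighbors. For an unswitched $v \in V-S$ the list $N(v)$ holds the neighbors in $V-S$ followed, after the dummy, by the nonneighbors in $S$, so the pre-dummy block is read directly and the post-dummy block is read as a complement once the bit flips to $1$. For a switched $v \in S$ the list $\widetilde{N}(v)$ holds the nonneighbors in $V-S$ followed by the neighbors in $S$, so it begins in complement mode and flips to direct mode at the dummy. In each case the single flip toggles precisely between the intra-part and cross-part regimes, and since the sort makes each regime a contiguous block, the enumeration produced equals the $G$-adjacency of $v$ exactly.

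Finally, I would bound the running time by invoking the charging argument behind the cited out-switching theorems. In direct mode every reported neighbor is charged to a $\widetilde{G}$-edge, while in complement mode the traversal walks the global list of still-undiscovered vertices on the opposite side, charging each non-advancing step (bad query) either to a freshly discovered vertex or to an entry of $\widetilde{N}(v)$, that is, to an edge of $\widetilde{G}$. Summing over all vertices gives $O(\widetilde{m})$ bad queries and $O(n)$ bookkeeping, hence the claimed $O(n+\widetilde{m})$ bound for both BFS and DFS. The main obstacle is precisely this amortized analysis: I must confirm that splitting each neighborlist into a direct half and a complement half does not disturb the bad-query charging of the out-switching algorithms, and in particular that sharing a single list of undiscovered vertices across the complement-mode scans of all blocks still yields the $O(\widetilde{m})$ bound instead of reintroducing an $\Omega(m)$ dependence. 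This is what makes the claimed ``trivial'' modification actually go through.
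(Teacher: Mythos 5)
Your proposal follows essentially the same route as the paper: build the pc-list from the decomposition of $\neg_S(G)$ into intra-part edges plus the complemented cut, relabel and radix-sort so each neighborlist splits into a $(V-S)$-block and an $S$-block separated by a dummy, flip the switch bit at the dummy to toggle between direct and complement mode, and then invoke the charging argument of the out-switching BFS/DFS theorems. Your mode assignments (complement-then-direct for switched vertices, direct-then-complement for unswitched ones) match the paper's construction, and the charging concern you flag at the end is exactly the step the paper dismisses as trivial, so the two arguments coincide.
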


\subsection{Contraction}

\noindent Another basic graph-theoretic operation is contraction of a subset of vertices.  Henceforth we shall assume that all switching operations are out-switches for ease of exposition.  Let $\widetilde{G} \in \mathcal{C}_G^+$ be a minimum representative, let $n(B)$ be the number of vertices of a subset $B \subseteq V$, $\widetilde{m}(B)$ be the number of edges incident to vertices of $B$ in the graph $\widetilde{G}$.  The results of this section will be needed for computing maximum matchings of general graphs.  Without loss of generality, we assume that the pc-list of $\widetilde{G}$ is sorted by vertex label.

\begin{lemma}\label{lem:contract}
Given $\widetilde{G} \in \mathcal{C}_G^+$, a set $B \subseteq V$ can be contracted to vertex $\hat{b}$ in $O(n(B)+\widetilde{m}(B))$ time.
\end{lemma}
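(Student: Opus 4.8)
The plan is to build the pc-list entry of the contracted vertex $\hat b$ directly, without ever materializing the (possibly $\Theta(n)$-sized) true neighborhoods of the switched members of $B$. Partition $B$ into its switched vertices $B_s$ and its unswitched vertices $B_u = B \setminus B_s$; these are read off the switch bits in $O(n(B))$ time. By definition a vertex $u \in V \setminus B$ is an out-neighbor of $\hat b$ in the contracted graph iff $u$ is an out-neighbor of some $v \in B$, i.e. iff either $u \in N(v)$ for some $v \in B_u$, or $u \notin \widetilde N(v)$ for some $v \in B_s$. Equivalently, $u$ is a \emph{non}-neighbor of $\hat b$ iff $u \in \bigcap_{v \in B_s}\widetilde N(v)$ and $u \notin \bigcup_{v \in B_u} N(v)$. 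This dual description is what lets us always store $\hat b$ compactly.

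First I would decide, by a single comparison in the spirit of the greedy step of Lemma~\ref{lem:buildOut}, which of the two descriptions is cheaper. If $B_s = \emptyset$, I store $\hat b$ unswitched and set $N(\hat b) = \big(\bigcup_{v \in B_u} N(v)\big) \setminus B$: scan every $N(v)$ and push each encountered vertex not in $B$ onto $\hat b$'s list the first time it is seen, using an $n$-entry boolean mark array to suppress duplicates. If $B_s \neq \emptyset$, I store $\hat b$ switched and set $\widetilde N(\hat b) = \big(\bigcap_{v \in B_s}\widetilde N(v)\big) \setminus \big(B \cup \bigcup_{v \in B_u} N(v)\big)$. The intersection is computed by a counting pass: walk each $\widetilde N(v)$ for $v \in B_s$, increment a counter for each vertex met, and keep only those whose final counter equals $|B_s|$; a second pass over the $N(v)$, $v \in B_u$, and over $B$ itself removes the excluded vertices. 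Since $\bigcap_{v \in B_s}\widetilde N(v) \subseteq \widetilde N(v_0)$ for any fixed $v_0 \in B_s$, the resulting list has size at most $\widetilde m(B)$, so $\hat b$ is genuinely compact.

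The key accounting point is that every vertex touched by either pass is an element of some neighborlist of a member of $B$, hence is counted by $\widetilde m(B)$, and the number of distinct vertices written into or excluded from $\hat b$'s list is $O(\widetilde m(B))$. To keep both passes linear I would reset the mark and counter arrays not by reinitialization but via a cleanup stack holding exactly the entries that were modified, so the whole construction of $\hat b$'s entry runs in $O(n(B) + \widetilde m(B))$ time. Finally, I redirect the rest of the graph into $\hat b$ through a relabeling table $\mathrm{rep}[\cdot]$, setting $\mathrm{rep}[v] = \hat b$ for each $v \in B$ in $O(n(B))$ time and resolving any later reference to a member of $B$ to $\hat b$ when a neighborlist is scanned; this avoids rewriting the lists of $\hat b$'s (possibly many) neighbors.

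I expect the main obstacle to be precisely this last point: consistently handling the ``reverse'' incidences into $B$ within the $O(n(B) + \widetilde m(B))$ budget. When $\hat b$ is switched it can have $\Theta(n)$ neighbors, so we cannot afford to visit each neighbor's list to insert $\hat b$; the lazy $\mathrm{rep}[\cdot]$ resolution sidesteps this, at the cost of allowing a single outside vertex adjacent to several members of $B$ to carry duplicate references to $\hat b$. The care needed is to argue that these duplicates affect neither correctness of the subsequent matching computation nor the charging arguments behind the traversal bounds, since each such duplicate is still chargeable to a distinct element of the original pc-list counted by $\widetilde m(B)$; deduplication of $\hat b$'s \emph{own} list, which is what the edge-cardinality guarantee rests on, is already enforced by the mark array.
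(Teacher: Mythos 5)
Your proposal is correct and follows essentially the same route as the paper: contract the switched members of $B$ by intersecting their pc-lists, the unswitched members by taking the union of theirs, and store $\hat b$ as a switched vertex whose list is the intersection minus the union, with the size bound coming from the containment in a single $\widetilde N(v_0)$. The only differences are implementation-level (you use mark/counter arrays with a cleanup stack where the paper merges sorted doubly-linked lists, and you address the reverse incidences into $B$ here, which the paper defers to the radix-sort-and-deduplicate cleanup in Lemma~\ref{lem:contractG}).
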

\begin{proof}
To build the neighbor list of $\hat{b}$, we first build two doubly-linked neighbor lists $X,Y$ that correspond to the contraction of all the switched vertices $S \subseteq B$ and unswitched vertices of $B-S$ respectively.

Initialize $X$ to be $\widetilde{N}(s)$ some $s \in S$ and let $b,c \in S$.  Then contracting $b$ and $c$ together corresponds to taking the intersection $\widetilde{N}(b) \cap \widetilde{N}(c)$.  Since the neighborlists are sorted, taking the intersection $\bigcap_{b \in S} \widetilde{N}(b)$ can be computed in $O(n(B)+\widetilde{m}(B))$ using a routine similar to the merge routine of merge-sort as follows.  Let $L[i]$ denote the $i$th element of a doubly-linked list $L$.

If $X[i] = \widetilde{N}(b)[j]$, then the comparison can be charged to the $j$th edge of $b$'s neighbor list.  If $X[i] > \widetilde{N}(b)[j]$, then the comparison can be charged to the $j$th edge of $b$'s neighbor list.  If $X[i] < \widetilde{N}(b)[j]$, then $X[i]$ is removed from the doubly-linked list $X$.  Removing a vertex from $X$ happens at most $\widetilde{m}(B)$ times since each deletion can be charged to an element of $\widetilde{N}(s)$.  

Initialize $Y$ to be $N(v)$ for some $v \in B-S$ and let $b,c \in B-S$. Suppose that $b,c \in B-S$ are unswitched, then contracting $b$ and $c$ together corresponds taking the union $N(b) \cup N(c)$.  The union $Y = \bigcup_{b \in B-S} N(b)$ can clearly be computed in $O(n(B)+\widetilde{m}(B))$.

To combine $X$ and $Y$ it suffices to scan $X$ and remove all elements from $X$ that also exist in $Y$. This can be done using a routine similar to the merge routine.  At the end of the routine, define $\widetilde{N}(\hat{b})$ to be $X$.  Since the sizes of $X$ and $Y$ are each $O(n(B) + \widetilde{m}(B))$, it follows that $\widetilde{N}(\hat{b})$ can be constructed in $O(n(B) + \widetilde{m}(B))$ time.
\end{proof}

\begin{lemma}\label{lem:contractG}
Let $\beta$ be a collection of vertex-disjoint subsets. Then the contracted graph $\widetilde{G}/\beta$ can be computed in $O(n+\widetilde{m})$.
\end{lemma}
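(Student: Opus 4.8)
The plan is to contract each block $B \in \beta$ independently by invoking Lemma~\ref{lem:contract}, building each super-vertex's neighbor list from the \emph{original} lists of $\widetilde G$, and then to bound the total work by a charging argument that exploits the vertex-disjointness of $\beta$. A single invocation on $B$ costs $O(n(B) + \widetilde m(B))$. Summing over $\beta$, disjointness gives $\sum_{B \in \beta} n(B) \le n$, and since every edge of $\widetilde G$ has exactly two endpoints, each lying in at most one block, every edge is incident to at most two blocks and is therefore counted in $\widetilde m(B)$ for at most two $B$'s; hence $\sum_{B \in \beta} \widetilde m(B) \le 2\widetilde m$. Adding these, the cost of constructing all super-vertex neighbor lists is $O(n + \widetilde m)$.

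Building the super-vertex lists is not yet the graph $\widetilde G / \beta$: I must also rewrite the lists of the vertices outside $\bigcup \beta$ (and the cross-references among super-vertices) so that they point to the contracted vertices, delete the self-loops that contraction creates, and collapse the resulting multi-edges. I would realize all of this with one relabeling map $\phi$ sending each vertex to its super-vertex (or to itself), apply $\phi$ to every stored list, and then radix-sort the entire pc-list by the new labels exactly as in the traversal preprocessing, so that equal labels become consecutive and are removed in a single linear scan in which self-loops $\hat b \in \widetilde N(\hat b)$ are also dropped. Since contraction never increases the number of stored (non-)adjacencies, this relabel--sort--dedup stage also runs in $O(n + \widetilde m)$, giving the claimed bound.

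The main obstacle is that membership in a neighbor list means opposite things for switched and unswitched vertices, so the de-duplication cannot be uniform. For an unswitched $w$, one member of $B$ in $N(w)$ already forces $w \sim \hat b$, so ordinary dedup is correct; but for a switched $w$ we have $w \sim \hat b$ \emph{unless} every vertex of $B$ lies in $\widetilde N(w)$, so $\hat b$ must be kept in $\widetilde N(w)$ only when the whole block occurs. I would therefore replace plain removal of repeats by a comparison of each block's occurrence count against $n(B)$, folded into the same sorted scan. The crux is to check that this counting variant stays within the $O(n + \widetilde m)$ budget and that the output is a legitimate pc-list of $\widetilde G/\beta$; vertex-disjointness of $\beta$ is precisely what keeps this bookkeeping linear.
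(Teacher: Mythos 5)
Your proposal follows the same route as the paper: contract each block via Lemma~\ref{lem:contract}, bound the total by $\sum_{B\in\beta} \bigl(n(B)+\widetilde{m}(B)\bigr) = O(n+\widetilde{m})$ using vertex-disjointness, and then clean up the remaining lists by relabeling, radix-sorting, and removing duplicates. The one genuine difference is your third paragraph: the paper's proof (and its Algorithm \texttt{contract}) says only ``radix-sorting and removing duplicates,'' which, taken literally, is wrong for a switched vertex $w$ outside $\bigcup\beta$ --- if some but not all of $B$ lies in $\widetilde{N}(w)$, then $w$ \emph{is} adjacent to $\hat{b}$ in $G/\beta$, yet plain deduplication would leave a copy of $\hat{b}$ in $\widetilde{N}(w)$ and thereby record a non-adjacency. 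Your occurrence-count-versus-$n(B)$ test in the sorted scan is the correct repair, it costs only the length of each run of equal labels, and so it stays within the $O(n+\widetilde{m})$ budget. In short, your argument is the paper's argument plus a correctness refinement the paper elides; nothing in your proposal is a gap.
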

\begin{proof}
By Lemma~\ref{lem:contract} we can perform all of the contractions in time $\sum_{B\in\beta} n(B) + \widetilde{m}(B) = O(n + \widetilde{m})$.  After performing the contractions, the pc-list must be cleaned up so that vertices subsumed by contractions are no longer referenced.  This can be done by radix-sorting and removing duplicates in $O(n + \widetilde{m})$ time.
\end{proof}

\section{Super-Linear Graph Algorithms}
In this section, we show that given a graph $G = (V,E)$, spending $O(n+m)$ time to compute an $O(n+\widetilde{m})$ space pc-list representation of $G$ gives rise to better bounds for several canonical unweighted graph algorithms.  

\subsection{Diameter and Transitive Closure}

At present, the most efficient combinatorial algorithm (ignoring log factors) for computing the diameter and transitive closure of a graph to our knowledge is the naive $O(n^2 + nm)$ algorithm, that is, calling BFS from each vertex.  The following results are straightforward.
\begin{theorem}
The diameter of a graph $G$ can be computed in $O(n^2+ n\widetilde{m})$ time.
\end{theorem}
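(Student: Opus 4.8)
The plan is to compute the diameter by the standard method of determining the eccentricity of every vertex via breadth-first search, but to run each BFS on the compact pc-list representation rather than on $G$ directly. Recall that in an unweighted (di)graph a single BFS from a source $v$ yields the shortest-path distances $d(v,u)$ to all reachable $u$, so the eccentricity $\mathrm{ecc}(v) = \max_u d(v,u)$ can be read off with $O(n)$ additional work, and the diameter is simply $\max_{v \in V} \mathrm{ecc}(v)$. The key leverage is the earlier result that, given a minimum representative $\widetilde{G} \in \mathcal{C}^+_G$, BFS on $G$ can be performed in $O(n+\widetilde{m})$ time; chaining $n$ such traversals is precisely what produces the $n\widetilde{m}$ term in the claimed bound.

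Concretely, I would first invoke Lemma~\ref{lem:buildOut} to construct a minimum representative $\widetilde{G} \in \mathcal{C}^+_G$ together with its pc-list in $O(n+m)$ time. Since $G$ is simple we have $m = O(n^2)$, so this preprocessing costs $O(n^2)$ and is absorbed into the target bound. Next, for each vertex $v \in V$ I would run the $O(n+\widetilde{m})$ pc-list BFS from $v$, record $\mathrm{ecc}(v)$, and reset the discovery marks in $O(n)$ time before moving to the next source. Summing over all $n$ sources gives $n \cdot O(n+\widetilde{m}) = O(n^2 + n\widetilde{m})$, and taking the maximum eccentricity adds only $O(n)$, so the overall running time is $O(n^2 + n\widetilde{m})$ as required.

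The only point requiring care is that the single static pc-list must be reused across all $n$ searches without incurring more than $O(n+\widetilde{m})$ overhead per run. Correctness of each individual search -- namely that scanning the possibly complemented neighbor lists recovers the true adjacencies of $G$, and hence its genuine shortest-path distances -- is inherited directly from the cited BFS result, so no new traversal argument is needed. The main obstacle is therefore bookkeeping rather than algorithmic: I must guarantee that all per-source state (the discovery flags, and, were a Seidel representative used instead, the bit flips triggered by the dummy vertices) is restored to its initial configuration between runs in $O(n)$ time, so that the additive reset does not inflate the per-search cost. Working with the out-switching representative $\widetilde{G} \in \mathcal{C}^+_G$ keeps the data structure static and avoids the dummy-vertex flipping altogether, reducing the reset to merely clearing the $O(n)$ discovery marks.
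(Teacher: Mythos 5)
Your proposal is correct and follows essentially the same route as the paper: build a minimum representative $\widetilde{G} \in \mathcal{C}^+_G$ in $O(n+m)$ time, run the $O(n+\widetilde{m})$ pc-list BFS from every vertex, and observe that the preprocessing is absorbed into the $O(n^2)$ term. The extra attention you give to resetting per-source state between searches is a sensible implementation detail that the paper leaves implicit, but it does not change the argument.
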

\begin{theorem}
The transitive closure of a graph $G$ can be computed in $O(n^2+ n\widetilde{m})$ time.
\end{theorem}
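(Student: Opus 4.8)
The plan is to reduce transitive closure to $n$ independent single-source reachability computations, each carried out by the fast switching-class traversal rather than ordinary BFS. First I would invoke Lemma~\ref{lem:buildOut} to build a minimum representative $\widetilde{G} \in \mathcal{C}^+_G$ together with its pc-list in $O(n+m)$ time; since $\widetilde{m} = O(m) = O(n^2)$, this preprocessing cost is absorbed into the claimed bound. The core observation is that the transitive closure is completely determined by the reachability sets $R(v) = \{u : v \rightsquigarrow u\}$: the row of the transitive-closure matrix indexed by $v$ is precisely the indicator vector of $R(v)$, so it suffices to compute each $R(v)$ quickly.

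Next, for each $v \in V$ I would run the switching-class BFS from $v$ on the pc-list of $\widetilde{G}$. Because out-switching encodes exactly the out-adjacency of $G$ and BFS explores along out-edges, the traversal started at $v$ visits a vertex $u$ if and only if $v \rightsquigarrow u$ holds in the original graph $G$ (and \emph{not} in $\widetilde{G}$); this is guaranteed by the earlier $O(n+\widetilde{m})$ BFS result, which traverses $G$ itself through its pc-list representation. Recording the set of discovered vertices thus yields $R(v)$, hence the $v$-th row of the transitive closure, in $O(n+\widetilde{m})$ time.

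Summing over all $n$ sources gives $O(n+m) + n \cdot O(n+\widetilde{m}) = O(n^2 + n\widetilde{m})$, as claimed. The argument is essentially the naive $O(n^2+nm)$ algorithm with each BFS call replaced by its switching-class analogue, so the only genuine point to verify---and the part I would treat most carefully---is that BFS over the pc-list reports reachability with respect to $G$ rather than the switched graph $\widetilde{G}$. This correctness is inherited directly from the traversal theorems, so no new traversal machinery is required; the remaining details (emitting each row and, if an explicit matrix is desired, paying the unavoidable $\Theta(n^2)$ to store it) are routine and fit within the stated bound.
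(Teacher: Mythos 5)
Your proposal is essentially identical to the paper's proof: build a minimum representative $\widetilde{G} \in \mathcal{C}^+_G$ in $O(n+m)$ time and run the $O(n+\widetilde{m})$ pc-list BFS from each of the $n$ vertices, yielding $O(n^2 + n\widetilde{m})$. The extra care you take to note that the traversal reports reachability in $G$ rather than in $\widetilde{G}$ is correct and is exactly what the cited BFS theorem guarantees.
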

\begin{proof}
Compute $\widetilde{G}$ in $O(n+m)$ time,  then run the naive algorithm from each vertex using the BFS algorithm of~\cite{DahlhausGM02}.  The runtime of this algorithm is $O((n+m) + n(n+\widetilde{m})) = O(n^2 + n\widetilde{m})$.
\end{proof}
\noindent Since $n^2+n\widetilde{m}$ is never worse than $n^2+nm$ and is sometimes better, this is indeed a better bound for both diameter and transitive closure.  

At this point it is natural to ask when a graph \emph{benefits} from its pc-list representation, that is, when its pc-list representation is asymptotically smaller than its adjacency-list representation.  It is easy to see that there are at least twice as many graphs that benefit from pc-lists as there are graphs that benefit from adjacency-lists.  This is because the pc-list is a generalization of the adjacency-list and the complement of any graph whose adjacency-list representation is asymptotically smaller than its adjacency-matrix representation must have a minimum representative $\widetilde{G} \in \mathcal{C}_G^+$ such that $\widetilde{m} = o(m)$.  For instance, the class of graphs whose complement is sparse benefits from its pc-list representation simply by out-switching all of its vertices. 

It would be interesting to give precise conditions for when a graph benefits from its pc-list representation, but for now, our intuition tells us that very dense graphs and ``unbalanced" graphs (graphs with vanishingly few vertices of average valency) appear to benefit most from the pc-list representation.\footnote{Perhaps it is possible to make this intuition well-defined via discrepancy theory.}  On the other hand, since almost all graphs are roughly $n/2$-regular, the pc-list does not provide an asymptotically smaller representation for most graphs.  The techniques of~\cite{CheriyanM96} and~\cite{FederM95} are notable since they give logarithmic speedups in runtime for almost all graphs.

\subsection{Hopcroft-Karp Bipartite Maximum Matching}

Notice that switching in general does not preserve the bipartite property.  Given a bipartite graph $G$, define the \emph{bipartite switching class} of $G$ (with respect to some switching operator), such that neighbors of $v$ become nonneighbors and nonneighbors of $v$ that lie outside of $v$'s partition class become neighbors.  In this section, all switches are assumed to be bipartite out-switches.  Familiarity with the bipartite maximum matching problem is assumed.

The Hopcroft-Karp algorithm consists of \emph{phases}, each of which strictly increases the size of a current matching.  In~\cite{HopcroftK73} it is shown that a phase consists of a call to a modified BFS routine followed by a call to a modified DFS and that only $O(\sqrt{n})$ phases are needed to compute a maximum matching.  These routines can be implemented to run in $O(n+m)$ time which gives rise to a $O(\sqrt{n}(n+m))$ bound for computing a maximum matching of a bipartite graph.  

The purpose of running the modified BFS is to discover a directed acyclic level graph $L$ such that any path connecting two unmatched vertices in $L$ corresponds to a shortest augmenting path in $G$.  A modified DFS is then conducted on $L$ in order to find a maximal set of vertex-disjoint augmenting paths $\mathcal{P}$ in $G$. These steps are repeated until a phase is encountered such that $\mathcal{P} = \emptyset$, in which case the matching $M$ must be maximum by a theorem of Berge. 

It suffices to show that given $\widetilde{G}$, a phase of Hopcroft-Karp can be implemented to run in $O(n+\widetilde{m})$ time. Let BFS* and DFS* be pc-list implementations of the aforementioned modified BFS and DFS routines.  The BFS* routine is essentially the same as the BFS algorithm of~\cite{DahlhausGM02} except for the following modifications.  Let $level(v)$ denote the BFS level of a vertex $v$.
\begin{enumerate}
\item The undiscovered vertices are divided into two doubly-linked lists $U_A$ and $U_B$.  If the current vertex $v \in A$, then BFS$^*$ only considers vertices in $U_B$ (similarly for $v \in B$).
\item The discovered vertices are kept in an array of doubly-linked lists $\mathcal{L}$ that represents a partition of the discovered vertices by BFS level.  In particular, a discovered vertex $v$ resides in the doubly-linked list at index $level(v)$ of $\mathcal{L}$.  This array represents the levels in the DAG $L$.
\item Let $k$ be the level of the first unmatched vertex in $B$ that has been dequeued. Once all vertices at level $k$ have been dequeued, the routine returns $\mathcal{L}$.
\end{enumerate}
 It is clear that the aforementioned modifications to the BFS routine of~\cite{DahlhausGM02} can be accomplished in $O(n+\widetilde{m})$ time.  Pseudocode of BFS* that achieves this bound is given in the appendix.
 
The vertices of $\mathcal{L}$ form the initial set of undiscovered vertices for the modified DFS routine.  Notice that explicitly constructing the level DAG $L$ from $\mathcal{L}$ would exceed the $O(n+\widetilde{m})$ time bound; however, this is unnecessary since $\mathcal{L}$ provides an implicit representation of $L$.
A precondition to the DFS algorithm of~\cite{LindzeyO13} is that doubly-linked list of undiscovered vertices is ordered according to the ordering of the vertices of the pc-list. For each $i \in \{1 \cdots k\}$, the ordering of $\mathcal{L}[i]$ might not respect the ordering of the vertices in the sorted pc-list; however, this can be corrected by performing a radix-sort on $\mathcal{L}$ in $O(n)$ time.  The DFS* routine is the same as the DFS algorithm in~\cite{LindzeyO13} except for the following modifications.

\begin{enumerate}
\item If a vertex $v \in \mathcal{L}[level(v)]$ is current, then $U$ points to the doubly-linked list $\mathcal{L}[level(v)+1]$ so that $v$ only considers undiscovered neighbors at $level(v)+1$.
\item When a path $P$ from the designated source vertex $s$ to an unmatched vertex in $B$ has been found, the routine adds $P-s$ to the set of vertex-disjoint augmenting paths and restarts DFS* from $s$.
\end{enumerate}
The vertices of $P-s$ cannot be considered in subsequent DFS* searches since they are removed from $\mathcal{L}$ once they are discovered.   Since the union of vertex-disjoint paths $\mathcal{P}$ has size $O(n)$, it is clear that DFS* can be implemented to run in $O(n+\widetilde{m})$ time.   Pseudocode of DFS* that achieves this bound is given in the appendix.

Finally, since a set of matched edges $M$ has size $O(n)$, it is clear that the symmetric difference between $M$ and the edges of the paths in $\mathcal{P}$ can be computed in $O(n)$ time.  This completes the description of a phase.  Because BFS* and DFS* are $O(n+\widetilde{m})$ and updating a matching takes $O(n)$ time, we obtain the following result.

\begin{lemma}\label{lem:phase}
A phase of Hopcroft-Karp can be implemented in $O(n+\widetilde{m})$ time.
\end{lemma}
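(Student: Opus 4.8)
The plan is to show that a phase decomposes into three pieces — a call to BFS*, a call to DFS*, and a symmetric-difference update of the current matching $M$ — and to bound each piece by $O(n+\widetilde{m})$, so that their sum is again $O(n+\widetilde{m})$. Correctness of the phase (that BFS* produces a level structure whose source-to-$B$ paths are exactly the shortest augmenting paths, and that DFS* extracts a maximal vertex-disjoint family of them) is inherited verbatim from Hopcroft--Karp~\cite{HopcroftK73}; the only thing requiring argument is the running time, and specifically that the modifications made to the underlying pc-list traversals do not destroy their $O(n+\widetilde{m})$ guarantees.

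First I would handle BFS*. The base algorithm is the pc-list BFS of~\cite{DahlhausGM02}, whose $O(n+\widetilde{m})$ bound rests on the charging argument sketched earlier: every bad query is charged to a distinct element of the pc-list, so the number of bad queries is $O(\widetilde{m})$ and the total work is $O(n+\widetilde{m})$. I would then argue that the three BFS* modifications add only $O(n)$ overhead and leave this charging intact. Splitting the undiscovered vertices into $U_A$ and $U_B$ merely restricts, for a current vertex on one side, the candidate neighbors to the opposite side, at $O(1)$ overhead per dequeue. Maintaining the level-partition array $\mathcal{L}$ inserts each discovered vertex into one doubly-linked list indexed by its level, which is $O(1)$ per vertex and $O(n)$ in total. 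The early-termination rule (stopping once all vertices of level $k$ have been dequeued) can only shorten the search. None of these touches the neighborlist scans to which the charging argument applies, so BFS* remains $O(n+\widetilde{m})$.

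Next I would handle DFS*. Its precondition — that each $\mathcal{L}[i]$ be ordered consistently with the sorted pc-list — is restored by a single radix-sort in $O(n)$ time. The base algorithm is the pc-list DFS of~\cite{LindzeyO13}, again $O(n+\widetilde{m})$ by charging bad queries to pc-list elements while keeping, for each vertex, a pointer into its neighborlist so that no neighbor is scanned twice. Pointing the undiscovered set $U$ to $\mathcal{L}[level(v)+1]$ simply confines the search to edges of the implicit level DAG and does not increase the scanning cost. The delicate modification is the second one: upon finding an augmenting path $P$, DFS* restarts from $s$. Here I would appeal to the fact that the vertices of $P-s$ are removed from $\mathcal{L}$ the moment they are discovered, so they are never revisited; hence each vertex is discovered at most once across all restarts of the phase, and — because the per-vertex neighborlist pointer persists between restarts — each pc-list element is examined only $O(1)$ times in total. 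Summed over the phase this gives $O(n+\widetilde{m})$ for DFS*.

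The main obstacle is precisely this last point: verifying that the repeated restarts of DFS* do not discard the work already done. The danger is that a naive restart from $s$ rescans neighborlists from scratch on each of the up to $\Theta(n)$ restarts, degrading the per-phase cost to $\omega(n+\widetilde{m})$. The crux is therefore to establish that (i) the discovered vertices of completed augmenting paths are deleted from the level lists and (ii) the neighborlist scan pointers are \emph{not} reset by a restart, so that the charging argument survives unchanged and each pc-list element still accounts for only $O(1)$ bad queries over the entire phase. Once this is in hand, and noting that $M$ and $\mathcal{P}$ each have size $O(n)$ so their symmetric difference is computed in $O(n)$ time, the total phase cost is $O(n+\widetilde{m})+O(n)+O(n+\widetilde{m})+O(n)=O(n+\widetilde{m})$, which is the claimed bound.
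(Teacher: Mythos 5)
Your proposal is correct and follows essentially the same route as the paper: decompose the phase into BFS*, an $O(n)$ radix-sort of $\mathcal{L}$, DFS*, and an $O(n)$ symmetric-difference update, then argue that the listed modifications preserve the $O(n+\widetilde{m})$ charging arguments of the underlying pc-list traversals. Your added emphasis on the restart mechanics --- that vertices of $P-s$ are deleted from $\mathcal{L}$ upon discovery and that neighborlist scan pointers persist across restarts --- is exactly the point the paper makes (and delegates to the restart step of the DFS of~\cite{LindzeyO13} and the appendix pseudocode), so no gap remains.
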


\noindent By Lemma~\ref{lem:phase} and the fact that only $O(\sqrt{n})$ phases are needed to compute a maximum matching of a bipartite graph,  we obtain the following result.
\begin{theorem}
A maximum matching of a bipartite graph $G$ can be computed in $O(n^{1.5}+\sqrt{n}\widetilde{m} +m)$ time.
\end{theorem}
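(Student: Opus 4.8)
The plan is to compose the per-phase bound of Lemma~\ref{lem:phase} with the classical phase count of Hopcroft--Karp, after paying a one-time cost to obtain a small representative. Given the adjacency list of the bipartite graph $G$, I would first build a minimum bipartite out-switching representative $\widetilde{G}$, and then iterate phases --- each a BFS* followed by a DFS* followed by a matching update --- until a phase produces no augmenting paths.

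First I would handle the preprocessing. Using the greedy sweep of Lemma~\ref{lem:buildOut} adapted to bipartite out-switches (switch a vertex precisely when it has more than $n/2$ neighbors in the opposite partition class, replacing its list by the complement taken within that class), a representative $\widetilde{G}$ of the bipartite switching class can be constructed in $O(n+m)$ time. This single pass is the only place where $m$, rather than $\widetilde{m}$, enters the running time, and it accounts for the additive $O(m)$ term in the claimed bound. I would then check that the bit-flag bookkeeping introduced here is compatible with the pc-list conventions assumed by BFS* and DFS*.

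Next I would account for the main loop. By Lemma~\ref{lem:phase}, each phase runs in $O(n+\widetilde{m})$ time. The crucial point to establish is that the number of phases is unchanged from the standard algorithm: the Hopcroft--Karp bound of $O(\sqrt{n})$ phases is a purely combinatorial statement about shortest augmenting paths in $G$ and the monotone growth of their length across phases, so it is insensitive to how $G$ is represented. Hence I would argue that, since BFS* and DFS* compute shortest augmenting paths and a maximal vertex-disjoint collection of them in $G$ itself (and not in $\widetilde{G}$), Berge's theorem and the length-monotonicity lemma apply verbatim, leaving the $O(\sqrt{n})$ phase bound intact. Summing, the total cost is $O(n+m) + O(\sqrt{n})\cdot O(n+\widetilde{m}) = O(m + n^{1.5} + \sqrt{n}\,\widetilde{m})$, where the stray $O(n)$ is absorbed into $O(n^{1.5})$.

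The main obstacle is not the arithmetic, which is immediate once the two ingredients are in hand, but rather justifying that running Hopcroft--Karp on the pc-list of $\widetilde{G}$ faithfully simulates the algorithm on $G$. The delicate step is confirming that BFS* and DFS*, which physically traverse the switched neighbor lists (where listed ``neighbors'' may be nonneighbors of $G$), nonetheless explore exactly the augmenting structure of $G$ and respect the level-DAG discipline of $\mathcal{L}$; once that correctness is secured, the per-phase bound and the phase count compose without further loss, yielding the stated running time.
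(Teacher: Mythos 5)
Your proposal is correct and follows the same route as the paper: a one-time $O(n+m)$ construction of the bipartite pc-list representative, followed by $O(\sqrt{n})$ phases each costing $O(n+\widetilde{m})$ by Lemma~\ref{lem:phase}, with the phase count unchanged because the augmenting-path structure lives in $G$ itself. The correctness concerns you flag about BFS*/DFS* are exactly what the paper discharges in the discussion preceding Lemma~\ref{lem:phase}, so nothing further is needed.
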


\noindent Pseudocode for Hopcroft-Karp that achieves the bound above is given in the appendix.  In~\cite{BalasN93} it is shown that finding maximum matchings of complement biclique graphs can be used as a heuristic to enumerate large cliques in graphs.  It is possible that the pc-list could make this heuristic more efficient in practice by circumventing the construction of the complement.

\subsection{Gabow-Tarjan Maximum Matching}

A full discussion of the Gabow-Tarjan maximum cardinality matching algorithm~\cite{GabowT91} is beyond the scope of this paper.  The following is only a rough sketch of the algorithm.  Familiarity with the maximum matching problem is assumed.\\

\noindent It is well known that the maximum matching problem for general graphs is complicated by the existence of alternating odd cycles (blossoms)~\cite{Edmonds65}.  
The Gabow-Tarjan maximum matching algorithm consists of $\lceil \sqrt{n} \rceil$ iterations of so-called \emph{phase 1} (not to be confused with Hopcroft-Karp's phases) followed by $O(\sqrt{n})$ calls to \emph{find\_ap\_set}~\cite{GabowT91}.  
Phase 1 consists of running \emph{find\_ap\_set} on a contracted graph $G/\beta$ and expanding unweighted blossoms afterwards. 
The \emph{find\_ap\_set} routine is a depth-first based search responsible for discovering augmenting paths and blossoms of the input graph. A pre-condition to \emph{find\_ap\_set} is that the input graph $G/\beta$ is contracted with respect to a set of blossoms $\beta$.  Once \emph{find\_ap\_set} halts, the routine returns a maximal set of vertex-disjoint augment paths in the contracted graph.  These paths in the contracted graph are translated into vertex-disjoint augmenting paths in the original graph in $O(n)$ time and the current matching is updated with respect to those paths.  In addition to finding these paths, an execution discovers a set of new blossoms as well as a set of previously discovered blossoms to be expanded. This gives rise to a new set of blossoms to be contracted before the next call to \emph{find\_ap\_set}.  A blossom is expanded if its dual variable $z$ becomes non-positive, which establishes an invariant that contracted blossoms $\hat{b}$ always have positive weight after the first execution of \emph{find\_ap\_set}.  Unweighted blossoms are expanded because they might be ``hiding augmenting paths"~\cite{Edmonds65}.  Expanding these blossoms gives \emph{find\_ap\_set} a chance to find these hidden augmenting paths in the next iteration. The maximum cardinality case is treated as a special case of their minimum weight matching algorithm and is shown to run in time $O(\sqrt{n}(n+m))$ since the algorithm performs $O(\sqrt{n})$ iterations of \emph{find\_ap\_set} which runs in time $O(n+m)$. 
In~\cite{GabowT91} it is shown that blossom, augmenting path, and dual variable maintenance all take $O(n)$ time and space during an iteration of phase 1.  

Lemma~\ref{lem:contractG} shows that we can build the contracted graph in time proportional to $\widetilde{G}$, so it remains to show that \emph{find\_ap\_set} can be conducted in $O(n+\widetilde{m})$ time.  We shall assume that we have obtained a minimum member $\widetilde{G}$ of $G$'s out-switching class.  Recall that a vertex of $\widetilde{G}$ is switched if and only if it has $n/2$ or more neighbors in the original graph.  It follows that we can build an adjacency lookup vector $\mathbf{v}[]$ for each switched vertex $v$ in $O(n+m)$ time.  This lookup vector will allow us to answer if some vertex $u$ is adjacent to a switched vertex $v$ in the original graph in $O(1)$ time. 

\begin{lemma}
The find\_ap\_set routine can be implemented in $O(n + \widetilde{m})$ time.
\end{lemma}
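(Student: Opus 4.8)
The plan is to separate the running time of \emph{find\_ap\_set} into the depth-first exploration that grows the alternating search forest and the auxiliary maintenance of blossoms, augmenting paths, and dual variables. By~\cite{GabowT91} the latter is $O(n)$ per invocation, independent of the number of edges, so it suffices to realize the edge-scanning of the search in $O(n+\widetilde{m})$ time. The forward part of the search---extending the current forest to a previously undiscovered vertex, and reporting an augmenting path on reaching an unmatched vertex---is a depth-first exploration, so I would drive it with the $O(n+\widetilde{m})$ DFS routine for out-switching classes. Concretely, maintain a global doubly-linked list of undiscovered vertices; at an unswitched outer vertex $w$ scan the explicit list $N(w)$, and at a switched outer vertex $v$ sweep the undiscovered list, testing each candidate against $\widetilde{N}(v)$ and skipping non-neighbors. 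As in the DFS analysis, each skip is a bad query charged to an element of $v$'s pc-list, and since the explicit lists of unswitched vertices and the quantities $\sum_{v}|\widetilde{N}(v)|$ are both $O(\widetilde{m})$, the forward exploration costs $O(n+\widetilde{m})$.

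The difficulty is that blossom (and augmenting-path) detection depends on the \emph{back edges} joining two already-discovered outer vertices, and the forward sweep above never surfaces such an edge: once an endpoint leaves the undiscovered list it is invisible to the sweep. I would split these edges by the switch-status of their endpoints. A bridge $(v,w)$ with an unswitched endpoint $w$ is caught by scanning $N(w)$ when $w$ becomes outer: if $v$ is already outer the bridge is reported immediately, and otherwise $w$ is registered on a pending list of $v$ to be processed when $v$ later becomes outer. Because the unswitched neighbor lists total $O(\widetilde{m})$, these registrations and scans stay within budget and detect every bridge with an unswitched endpoint irrespective of the order in which the endpoints are discovered.

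The remaining case---a bridge between two \emph{switched} outer vertices---is the crux, since neither endpoint appears in the other's scanned list and we cannot afford to enumerate the $\ge n/2$ genuine neighbors of a switched vertex. Here I would exploit the defining density of switched vertices together with the $O(1)$ adjacency vectors $\mathbf{v}[\,]$: two switched vertices $u,v$ are non-adjacent exactly when $u\in\widetilde{N}(v)$, so the unordered non-adjacent pairs inside $S$ number at most $\sum_{v\in S}|\widetilde{N}(v)|=O(\widetilde{m})$. Maintaining the outer switched vertices grouped by their current blossom (with membership under an incremental set-union structure and neighbor lists merged via Lemma~\ref{lem:contract}), I would detect a switched--switched bridge by probing cross-group pairs with $\mathbf{v}[\,]$. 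Each \emph{negative} probe exposes a distinct non-adjacent pair and is charged to the pc-list, for $O(\widetilde{m})$ total; each \emph{positive} probe is a legitimate bridge that merges two groups, and since there are only $O(n)$ blossom and augment operations per call it is charged against those. Because every switched vertex misses fewer than $n/2$ others, cross-group adjacencies are abundant and the outer structure contracts rapidly rather than inducing $\Omega(m)$ probes.

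The step I expect to be the main obstacle is exactly this last charging for switched--switched bridges: one must orchestrate the probes so that (i) each bridge accepted is a valid one along the search forest (or a genuine augment between two trees), matching the discipline of~\cite{GabowT91}, and (ii) each non-adjacent pair is probed $O(1)$ times, so that every failed adjacency query lands on a fresh element of the pc-list and every successful one on a blossom operation. The online contraction of Lemma~\ref{lem:contract} together with the set-union bookkeeping is what keeps this $O(1)$ amortized per probe; granting it, the forward exploration, the unswitched-endpoint bridges, and the switched--switched bridges are each $O(n+\widetilde{m})$, so after the one-time $O(n+m)$ construction of the vectors $\mathbf{v}[\,]$ the whole of \emph{find\_ap\_set} runs in $O(n+\widetilde{m})$ time.
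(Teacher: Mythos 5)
Your overall strategy coincides with the paper's: drive the grow steps with the $O(n+\widetilde{m})$ DFS for out-switching classes, delegate the $O(n)$ blossom/path/dual bookkeeping to \cite{GabowT91}, and detect edges incident to outer switched vertices with the $O(1)$ lookup vectors $\mathbf{v}[\,]$, charging each failed probe to an entry of the prober's pc-list (a non-neighbor of a switched vertex is by definition an element of its neighborlist in $\widetilde{G}$). You also correctly isolate the crux. But the step you explicitly ``grant'' --- that each non-adjacent pair is probed $O(1)$ times and that each accepted probe is a legitimate blossom step in the search discipline --- is precisely the content of the lemma, and your cross-group probing scheme supplies no mechanism for it: if outer switched vertices are probed pairwise across groups whenever the group structure changes, nothing prevents the same non-adjacent pair from being re-examined after every merge, and ``cross-group adjacencies are abundant'' does not bound the number of failed probes.

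The paper closes exactly this hole with additional structure rather than a counting argument. It imposes the invariants that the order in which vertices become outer is recorded in a doubly-linked list $OUT$, that a blossom step scanned from $x$ goes to a $b(y)$ that became outer after $b(x)$ and whose vertices are already completely scanned, and that blossom steps from $x$ begin only once $x$ has exhausted its grow steps. Blossom detection from a switched current vertex $x$ then becomes a DFS over the suffix $OUT_{>x}$: a hit $\mathbf{x}[y]=1$ triggers a blossom step and removes all of $b(y)$ from $OUT$, so it is never probed again; a miss $\mathbf{x}[y]=0$ is appended to an auxiliary ordered list $\widetilde{N}'(x)$ that respects the $OUT$ order, so the restart step of the DFS of \cite{LindzeyO13} applies verbatim and guarantees $x$ considers $y$ only $O(1)$ times. (An unswitched current vertex simply scans its explicit list, which subsumes your separate ``pending list'' device.) Without this $OUT$-based orchestration, or an equivalent one, your amortization claim is an assumption rather than a proof, so the argument as written has a genuine gap at its central step.
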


\noindent It suffices to show that the subroutine \emph{find\_ap} can be implemented in $O(n+\widetilde{m})$ time~\cite{GabowT91}.  Let $b(v)$ denote the blossom that $v$ currently belongs to.  It is important to note that \emph{find\_ap} only performs grow steps and blossom steps.  We show how to perform each of these steps when the current vertex is switched.  The following invariants will be needed to simplify the analysis of our modification to \emph{find\_ap}.

\begin{enumerate}
\item Let $x$ be the current vertex. If an edge $xy$ is scanned and $b(y)$ became outer after $b(x)$, then a blossom step is performed and every vertex in $b(y)$ has been completely scanned~\cite{GabowT91}.
\item The order in which vertices become outer is given by the ordering of a doubly-linked list $OUT$.
\item If the current vertex performs a blossom step, then no more grow steps are possible from the current vertex.
\end{enumerate}
The second and third invariants are not part of the specification of their algorithm; however, it can be implemented to maintain these invariants without affecting the correctness or resource bounds of the algorithm. 

It is straightforward to modify the DFS algorithm of~\cite{LindzeyO13} so that when a current outer vertex $u$ is considering an inner undiscovered neighbor $v$, the next outer current vertex becomes $v'$ where $vv' \in M$ is a matched edge.  The details of the blossom step are slightly more involved.  It helps to view the blossom steps as DFS where the ``undiscovered vertices" are those vertices that have an outer label that is greater than the current vertex's outer label.  Once a blossom step is performed, the inner vertices along that blossom become current (and therefore outer) in order of their DFS discovery time and the routine proceeds recursively.


Assume that a current vertex $x$ has no more grow steps, that is, there are no more undiscovered vertices reachable from $x$.
If $x$ is not switched, then proceed as usual in \emph{find\_ap}; otherwise, assume that $x$ is switched.
The invariants guarantee that the blossom steps from $x$ can be conducted by performing DFS where the ``undiscovered vertices" are those vertices to the right of $x$ in $OUT$.  Let us refer to this ordered set of vertices as $OUT_{>x}$.  The main obstacle is that we cannot use the DFS routine of~\cite{LindzeyO13} immediately to solve this problem since the ordering of the pc-list does not respect the ordering of $OUT_{>x}$.  For this reason we must use $\mathbf{x}[]$ to determine adjacency in $O(1)$ time.  

Let $y \in OUT_{>x}$.  If $\mathbf{x}[y] = 1$, then a blossom step is performed. When a blossom step is performed, then $b(y)$ is removed from $OUT$ since invariant 1 implies that these vertices have already been completely scanned and cannot be further used to discover an augmenting path.  Assuming the blossom data-structures in~\cite{GabowT91}, it follows that blossom steps take $O(n+\widetilde{m})$ time.  

If $\mathbf{x}[y] = 0$, then no blossom step is performed, but we can charge the lookup to $y$'s entry in $\widetilde{N}(x)$.  But as in the routines of~\cite{DahlhausGM02} and~\cite{LindzeyO13}, we must guarantee that $x$ considers $y$ only $O(1)$ times throughout the entire DFS execution on $OUT_{>x}$.  To ensure this, each time we encounter a $y$ such that $\mathbf{x}[y] = 0$, we add it to the end of an auxiliary doubly-linked neighborlist for $\widetilde{N}'(x)$.  This has the effect of building an ordered neighborlist for $x$ that respects the ordering of $OUT$.  Using this neighborlist, we can follow the restart step of the DFS algorithm in~\cite{LindzeyO13} to correctly and efficiently resume DFS on $OUT_{>x}$ when $x$ returns from a recursive call. As in the bipartite case, keeping track of augmenting paths and updating the current matching can be accomplished in $O(n)$ time.  Since \emph{find\_ap\_set} can be implemented in $O(n + \widetilde{m})$ time, we obtain the following result.

\begin{theorem}\label{thm:maxmatch}
A maximum matching of a graph $G$ can be computed in $O(n^{1.5}+\sqrt{n}\widetilde{m} +m)$ time.
\end{theorem}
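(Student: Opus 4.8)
The plan is to derive the theorem as an additive-accounting consequence of the per-iteration bounds already in hand, so that the only new work is assembling the preprocessing cost, the cost of one iteration, and the iteration count. First I would apply Lemma~\ref{lem:buildOut} to obtain a minimum representative $\widetilde{G}\in\mathcal{C}^+_G$ in $O(n+m)$ time, and then spend a further $O(n+m)$ time building the adjacency lookup vectors $\mathbf{v}[]$ for the switched vertices. The latter is affordable because a vertex is switched only if it has at least $n/2$ neighbors in $G$, so there are $O(m/n)$ switched vertices and the aggregate size of their length-$n$ lookup vectors is $O(m)$. Together these steps are the entire preprocessing and account for the additive $O(m)$ term in the target bound.

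Next I would bound a single iteration of \emph{phase 1}. Forming the contracted graph $\widetilde{G}/\beta$ against the current blossom collection $\beta$ costs $O(n+\widetilde{m})$ by Lemma~\ref{lem:contractG}, and running \emph{find\_ap\_set} on the resulting pc-list costs $O(n+\widetilde{m})$ by the preceding lemma. The remaining work of an iteration is linear: by~\cite{GabowT91}, translating the returned vertex-disjoint augmenting paths of $\widetilde{G}/\beta$ into augmenting paths of $G$, updating the matching by symmetric difference, maintaining blossoms and dual variables, and expanding the unweighted (non-positive $z$) blossoms each take $O(n)$ time and space, and none of these operations inspects the representation beyond what \emph{find\_ap\_set} already scans. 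Hence one iteration runs in $O(n+\widetilde{m})$ time. Because Gabow--Tarjan performs $\lceil\sqrt{n}\rceil$ iterations of phase 1 followed by $O(\sqrt{n})$ further calls to \emph{find\_ap\_set}, summing the preprocessing and the iterations gives
$$O(n+m)+O(\sqrt{n})\cdot O(n+\widetilde{m}) = O\!\left(n^{1.5}+\sqrt{n}\,\widetilde{m}+m\right),$$
as claimed.

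The main obstacle is not the arithmetic but verifying that the symbol $\widetilde{m}$ in the per-iteration bound is legitimately the original representation size and does not quietly degrade to $m$ as blossoms are contracted and expanded. I would first confirm, via Lemma~\ref{lem:contractG}, that each contracted graph $\widetilde{G}/\beta$ is stored as a pc-list with $O(\widetilde{m})$ edges, so that the parameter handed to \emph{find\_ap\_set} is genuinely $O(\widetilde{m})$ throughout. The subtler point concerns the lookup vectors used by \emph{find\_ap} when a switched super-vertex $\hat{b}$ becomes current: allocating and zeroing a fresh length-$n$ vector for each such $\hat{b}$ would cost $\Theta(n)$ apiece and could total $\Theta(n^2)$ per iteration, so I would instead maintain one globally shared bit-array that is set by scanning $\widetilde{N}(\hat{b})$ and reset by re-scanning the same list, charging its cost to the entries of $\widetilde{N}(\hat{b})$. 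Together with the charging of each auxiliary neighborlist $\widetilde{N}'(x)$ against entries of $\widetilde{N}(x)$ exactly as in the blossom-step analysis, this keeps every iteration at $O(n+\widetilde{m})$, and the additive accounting above then yields the stated bound.
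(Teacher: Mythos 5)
Your proposal follows essentially the same route as the paper: preprocess to a minimum out-switching representative with lookup vectors for the $O(m/n)$ switched vertices in $O(n+m)$ time, bound each of the $O(\sqrt{n})$ iterations by $O(n+\widetilde{m})$ via Lemma~\ref{lem:contractG} and the $O(n+\widetilde{m})$ implementation of \emph{find\_ap\_set}, and sum. Your added care about keeping the contracted representation at size $O(n+\widetilde{m})$ and about sharing a single resettable bit-array for switched super-vertices (rather than zeroing a fresh length-$n$ vector per blossom) fills in implementation details the paper leaves implicit, but does not change the argument.
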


\section{Conclusions}

We have demonstrated that switching classes can be used to obtain asymptotically better bounds for several graph algorithms through use of the pc-list data-structure.  These improvements on algorithm resource bounds suggest that the pc-list is a more efficient data-structure than the adjacency list for several unweighted graph problems.  But like any graph representation, it has its trade-offs.  For instance, finding an arbitrary neighbor of a switched vertex $v$ in the original graph takes $\Theta(|\widetilde{N}(v)|)$ time whereas finding an arbitrary neighbor of an unswitched vertex takes $\Theta(1)$ time.

It may be tempting to believe that any unweighted graph algorithm can be implemented to work with a pc-list representation; however, it seems unlikely that the maximum matching algorithm of~\cite{MicaliV80} is amenable to pc-lists.  This is due to the fact that their approach requires a number of edges (so-called bridges) to be queued and processed at a later point in the execution of the algorithm.  When a bridge is processed, it does not always progress the algorithm, that is, it may not lead to a grow step, produce an augmenting path, or discover a blossom.  In light of this, processing a bridge cannot always be charged to a vertex or edge of $\widetilde{G}$.  We were unable to obtain a bound tighter than $O(m)$ for the number of bridges queued throughout the algorithm; however, it might be possible to modify the algorithm so that only bridges that lead to an augmenting path or blossom are considered.

In fact, there are classical super-linear graph problems that inherently cannot benefit from the pc-list.  Two such examples are finding an Eulerian tour of $G$ and computing the ear decomposition of $G$.  Any algorithm for these problem must spend $\Omega(m)$ time for arbitrary graphs since the size of the output is proportional to the number of edges.

An obvious line of future work would be towards a more formal characterization of graphs that benefit from pc-list representations as well as the development of more pc-list algorithms.  We conclude by thanking Ross M. McConnell for his insightful comments.



\bibliographystyle{plain}
\bibliography{master}

\section{Appendix}

\begin{figure}[htp]\label{fig:pclist}

\includegraphics[scale=.14]{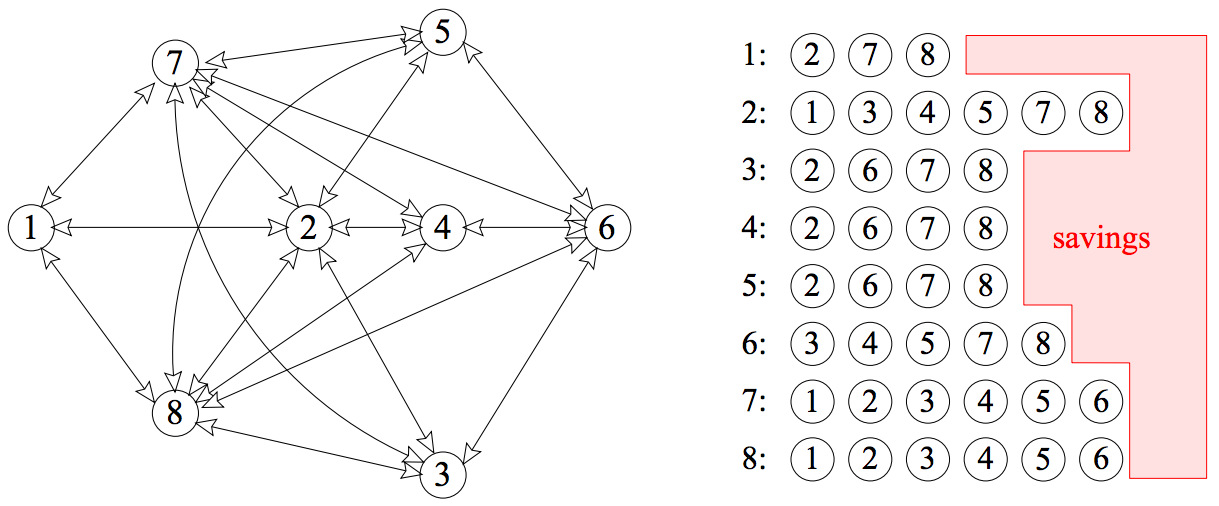} \includegraphics[scale=.14]{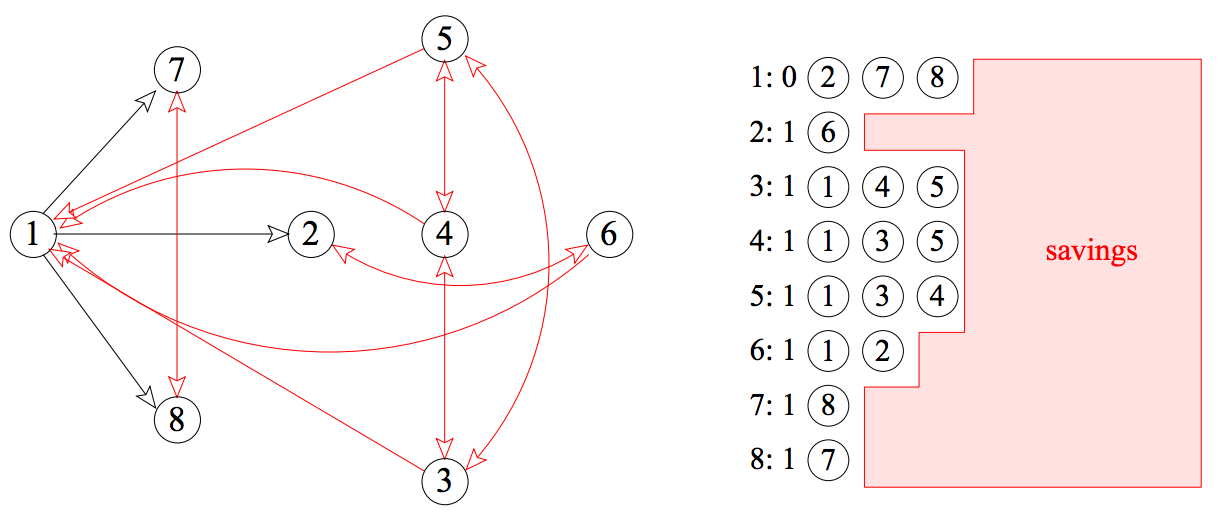}
\centering
\includegraphics[scale=.14]{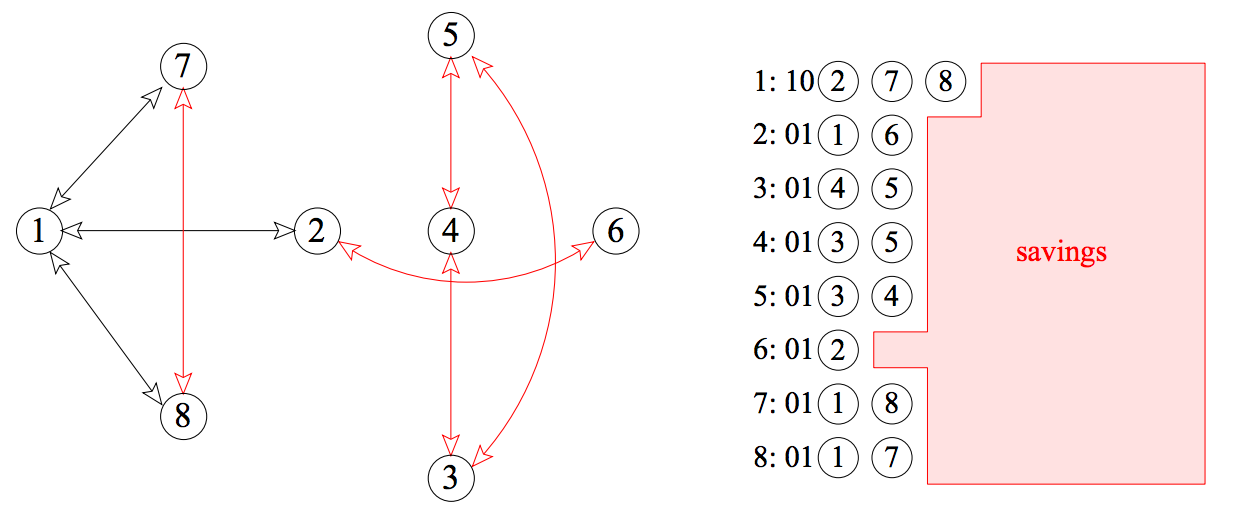}
\caption{An illustration of the partially complemented adjacency list.  Setting $G$ as the graph given in the leftmost picture, it is not hard to see that the graph obtained by out switching (the rightmost picture) and Gale-Berlekamp switching (the lower picture) are minimum representatives of $\mathcal{C}_G^+$ and $\mathcal{C}_G^{\pm}$ respectively.  This demonstrates that $\overline{G}$ is not necessarily a minimum representative of $G$.}
\end{figure}

\begin{algorithm}
\KwData{A bipartite graph $G = (A,B,E)$}
\KwResult{A maximum matching $M$ of $G$}
$M \longleftarrow \emptyset$\;
Let $\widetilde{G}'$ be the pc-list of a minimum representative of $\mathcal{C}^+_G$\;
\Repeat{$\mathcal{P} = \emptyset$}{
Let $\widetilde{G}$ be a copy of the pc-list $\widetilde{G}'$\;
$\mathcal{P} \longleftarrow \emptyset$\;
$U_A,U_B \longleftarrow A,B$\;
connect $s$ to each unmatched vertex in $A$\;
$\mathcal{L} \longleftarrow$ BFS$^*(s)$\;
$\mathcal{L} \longleftarrow$ radix-sort($\mathcal{L}$)\;
DFS*($s$)\;
$M \longleftarrow \mathcal{P} \bigoplus M$\;
}
return $M$\;
\caption{Hopcroft-Karp($G$)}  \label{HK}
\end{algorithm}

\begin{algorithm} \label{alg:BFS}
\KwData{A pc-list $\widetilde{G}$, two global ordered doubly-linked lists $U_A$, $U_B$ of the undiscovered vertices of $A$ and $B$ respectively.}
\KwResult{A partition $\mathcal{L}$ that represents a level graph.}
Let $\mathcal{L}$ be an array of doubly-linked lists\;
Let $Q$ be a queue\;
$k \longleftarrow +\infty$\;
\texttt{enqueue($s$,$Q$)}\;
\Repeat{$Q$ is empty}{
$v \longleftarrow$ \texttt{dequeue($Q$)}\;

\If{$level(v) > k$}{
 break\;
}  

\ElseIf{$v \in A$}{
Let $U$ point to $U_B$\;
}
\Else{
\If{$v$ is unmatched}{
$k \longleftarrow level(v)$\;
}
Let $U$ point to $U_A$\;
}

\If{$v$ is unswitched}{
  \For{$u \in N(v)$}{
     \If{$u$ is undiscovered}{
     	\texttt{remove($u$,$U$)}\;
     	\texttt{enqueue($u$,$Q$)}\;
     	\texttt{append($u$,$\mathcal{L}[level(v)+1]$)}\;
     }
  }
}
\Else{

\For{$u \in \widetilde{N}(v)$}{
   \texttt{mark($u$,$U$)}\;
   }
 \For{$u \in U$}{
     \If{$u$ is unmarked}{
     	\texttt{remove($u$,$U$)}\;
     	\texttt{enqueue($u$,$Q$)}\;
     	\texttt{append($u$,$\mathcal{L}[level(v)+1]$)}\;
     }
  }

}
}
return $\mathcal{L}$
\caption{BFS$^*$($s$)} 
\end{algorithm}

\begin{algorithm} \label{alg:DFS}
\caption{DFS*($v$)} 
\KwData{A pc-list $\widetilde{G}$, a current undiscovered vertex $v$, a global array $\mathcal{L}$ of ordered doubly-linked lists of undiscovered vertices, and global set of vertex-disjoint shortest augmenting paths $\mathcal{P}$.}
\KwResult{Discovers a maximal set of vertex-disjoint augmenting paths $\mathcal{P}$.}
Let $U = \mathcal{L}[level(v)+1]$\;
\If{$v \neq s$}{
\texttt{remove}($\mathcal{L}[level(v)]$,$v$)\;
}
\If{$v \in B$ and $v$ is unmatched}{
Let $P$ be vertices of the DFS stack\;
$\mathcal{P} \longleftarrow \mathcal{P} \cup \{P - s\}$\;
DFS*($s$)\;
}
\If{$v$ is unswitched}{
	\For{$u \in N(v)$}{
	 	   DFS*($u$)\;
	}
}
\Else{
	$u_v \longleftarrow$ \texttt{head}($U$)\;
	$n_v \longleftarrow$ \texttt{head}($\widetilde{N}(v)$)\;
	\While{$u_v \neq null$}{
	\If{$u_v = n_v$}{
		$u_v \longleftarrow$ \texttt{next}($U,u_v$)\;
		$n_v \longleftarrow$ \texttt{next}($\widetilde{N}(v),n_v$)\;
	}
	\ElseIf{$u_v > n_v$}{
		$s \longleftarrow n_v$\;
		$n_v \longleftarrow$ \texttt{next}($\widetilde{N}(v),n_v$)\;
		\texttt{remove}($\widetilde{N}(v),s$)\;
	}
\Else{
	DFS($u_v$)\;
	\tcp{restarting step ...}
        $w$ = prev($\widetilde{N}(v),n_v$)\;
	\While{$w \neq null$ and $w \notin U$}{
		$t = w$\;
		$w \longleftarrow$ \texttt{prev}($\widetilde{N}(v),t$)\;	
		\texttt{remove}($\widetilde{N}(v), t$))\;
	}
	\If{$w = null$}{
		$u_v \longleftarrow$ \texttt{head}($U$)\;
	}
	\Else{
		$u_v \longleftarrow$ \texttt{next}($U,w$)\;
	}
}
}
}

\end{algorithm}

\begin{algorithm}\label{alg:contract}

\For{$B_i \in \beta$}{
$X, Y \longleftarrow \emptyset$\;
\ForEach{$b \in B_i = \{b_0,b_1,\cdots,b_k\}$}{
\If{b is switched}{
	\If{$X = \emptyset$}{ $X \longleftarrow \widetilde{N}(b)$\;}
	\Else{$X \longleftarrow X \cap \widetilde{N}(b)$\;}
}
\Else{
	$Y \longleftarrow Y \cup N(b)$\;
 }
}

$\widetilde{N}(\hat{b}_i) \longleftarrow X \bigoplus Y$\;
}
Relabel each vertex $v \in B_i$ with $\hat{b}_i$\;
$\widetilde{G} \longleftarrow$ radix-sort($\widetilde{G}$)\;
$\widetilde{G} \longleftarrow$ remove-duplicates($\widetilde{G}$)\;
\caption{contract($\beta$)}
\end{algorithm}

\begin{algorithm}
Let $\widetilde{G}$ be the pc-list of a minimum representative of $\mathcal{C}^+_G$\;
Let $T$ be the blossom tree of $\widetilde{G}$\;
$i \longleftarrow 0$\;
$\beta \longleftarrow \emptyset$\;
$\widetilde{G}'\longleftarrow \widetilde{G}$\;

\Repeat{$i > \lceil \sqrt{n} \rceil$}{ \tcp{phase 1}
$\widetilde{G} \longleftarrow$ contract($\beta,\widetilde{G}'$)\;
$\mathcal{P},T \longleftarrow$ \emph{find\_ap\_set}($\widetilde{G}$)\;
\tcp{search performs dual adjustments \& expands nonpositve blossoms}
$\beta \longleftarrow$ \emph{search}($\widetilde{G},T$)\;
$M \longleftarrow \mathcal{P} \bigoplus M$\;
$i \longleftarrow i + 1$\;
}
\Repeat{$\mathcal{P} = \emptyset$}{
$\mathcal{P} \longleftarrow$  \emph{find\_ap\_set}($\widetilde{G}$))\;
$M \longleftarrow M \bigoplus \mathcal{P}$\;
}
return $M$\;

\caption{Gabow-Tarjan($G$)}
\end{algorithm}

\end{document}